\theoremstyle{plain}
\def\max{{\rm max}}
\def\min{{\rm min}}
\long\def\longdelete#1{}
\DeclareMathOperator{\prrd}{prrd}
\title{\textbf{Tight Competitive Analyses of Online Car-sharing Problems}}
\author{Ya-Chun Liang}{Department of Industrial Engineering and Engineering Management,
National Tsing Hua University, Hsinchu 30013, Taiwan}{ycliang512@gapp.nthu.edu.tw}{}{}
\author{Kuan-Yun Lai}{Department of Industrial Engineering and Engineering Management,
National Tsing Hua University, Hsinchu 30013, Taiwan}{lj841113@gapp.nthu.edu.tw}{}{}
\author{Ho-Lin Chen}{Department of Electrical Engineering,
National Taiwan University, Taipei 106, Taiwan}{holinchen@ntu.edu.tw}{}{}
\author{Kazuo Iwama}{Academic Center for Computing and Media Studies, Kyoto University, Kyoto 606-8502, Japan}{iwama@kuis.kyoto-u.ac.jp}{}{}
\authorrunning{Y.-C. Liang, K.-Y. Lai, H.-L. Chen, K. Iwama}
\subjclass{Algorithms and Combinatorial Optimization}
\keywords{Car-sharing, Competitive analysis, On-line scheduling, Randomized algorithm}
\begin{document}
\nolinenumbers
\maketitle

\renewcommand{\algorithmicrequire}{\textbf{Input:}}
\renewcommand{\algorithmicensure}{\textbf{Output:}}

\newcommand{\il}{I\kern-.07em\ell}
\newcommand{\ir}{Ir}
\newcommand{\gf}{G\kern-.1emf}
\newcommand{\gl}{G\ell}
\newcommand{\gr}{Gr}
\newcommand{\ol}{O\ell}
\newcommand{\of}{Of}
\newcommand{\orr}{Or}
\newcommand{\hll}{H\ell}
\newcommand{\hrr}{Hr}

\newcommand{\red}{\color{red}}
\newcommand{\blue}{\color{blue}}
\newcommand{\black}{\color{black}}

\begin{abstract}

The car-sharing problem, proposed by Luo, Erlebach and Xu in 2018,
mainly focuses on an online model in which there are two locations: 0
and 1, and $k$ total cars.  Each request which specifies its pick-up
time and pick-up location (among 0 and 1, and the other is the
drop-off location) is released in each stage a fixed amount of time
before its specified start (i.e. pick-up) time.  The time between the
booking (i.e. released) time and the start time is enough to move
empty cars between 0 and 1 for relocation if they are not used in that
stage.  The model, called $k$S2L-F, assumes that requests in each stage
arrive sequentially regardless of the same booking time and the
decision (accept or reject) must be made immediately.  The goal is to
accept as many requests as possible.  In spite of only two locations,
the analysis does not seem easy and the (tight) competitive ratio (CR)
is only known to be 2.0 for $k=2$ and 1.5 for a restricted value of $k$, i.e.,
a multiple
of three. In this paper, we remove all the holes of unknown CR's;
namely we prove that the CR is $\frac{2k}{k + \lfloor k/3 \rfloor}$ for all
$k\geq 2$. Furthermore, if the algorithm can delay its decision until
all requests have come in each stage, the CR is improved to roughly
4/3. We can take this advantage even further, precisely we can achieve a CR
of $\frac{2+R}{3}$ if the number of requests in each stage is at
most $Rk$, $1 \le R \le 2$, where we do not have to know the value of $R$ in
advance. Finally we demonstrate that randomization also helps to get
(slightly) better CR's.

\end{abstract}

\newpage
\section{Introduction}
Our problem in this paper is the \emph{online car-sharing problem}.
In car-sharing (not only for cars, but also for other resources like bikes and shuttle-buses), there are several service stations in the city, for instance in residential areas and downtown, at popular sightseeing spots, and so on.
Customers can make a request with a pick-up time and place and a drop-off time and place. The decision for accepting or rejecting a request should be made in an online fashion and we want to maximize the profit by accepting as many requests as
possible.
Relocation of 
 (unused) resources 
is usually possible with a much smaller or even negligible costs. (It is seen occasionally that a truck is carrying bikes for this purpose.)
Theoretical studies of this problem have started rather recently and turned out
to be nontrivial even for two locations.

\medskip
\noindent
{\bf Model.}
We basically follow the problem
setting of previous studies by Luo et
al.~\cite{luo2018,luo2018car,luo2019car,LEX_kS2L} with main focus to
that of two locations and $k$ servers (i.e. cars).
The two location are denoted by 0 and 1 
and $k (\geq 2)$ servers are initially located at location 0.
The travel time from 0 to 1 and 1 to 0 is the same, denoted by $t$.
The problem for $k$ servers and two locations is called the $k$S2L problem for short. 

\begin{figure}[h]
\centering
\includegraphics[scale=0.4]{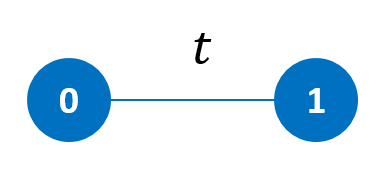}
\caption{The car-sharing problem with two locations}
\label{2location}
\end{figure}

We denote the $i$-th request by $r_i = (\tilde{t_i}, t_i, p_i)$ which
is specified by the \emph{release time} or the \emph{booking time}
$\tilde{t_i}$, the \emph{start time} $t_i$, and the \emph{pick-up
location} $p_i \in \left\{0, 1\right\}$ (the \emph{drop-off location}
is $1-p_i$). If $r_i$ is accepted, the server
must pick up the customer at $p_i$ at time $t_i$ and drop off the
customer at  $1-p_i$  at time $t_i + t$. Suppose for each $r_i$,
$t_i$ is an integer multiple of the travel time between location 0 and
1, i.e., $t_i = vt$ for some $v \in \mathbb{N}$. 
We assume that $t_i
- \tilde{t_i}$ is equal to a fixed value $a$, where $a \ge t$ for all
requests. Without loss of generality, assume $a=t$.
Then we are only interested in
a discrete-time stage,
denoted by 0, 1, 2, $\ldots$.

Each server can only serve one request at a time. Serving a request yields
a fixed positive profit $y$. 
A server used for a request with $p_i=0$ ($p_i=1$, resp.) cannot be used for a
request with $p_i=0$ ($p_i=1$, resp.) in the next stage. 
We allow \emph{empty movement}s, i.e., a server can be moved from one location to
the other
without serving any request. An empty movement spends time $t$,
but takes no cost. The goal of the $k$S2L problem is to maximize the total profit
by serving a set of online requests.
Note that the performance of an online scheduling algorithm is typically evaluated by
\emph{competitive analysis}.
More precisely, the quality of an online algorithm is measured by the worst case ratio, called
\emph{competitive ratio} (CR), which is defined to be the fraction 
 of the profit of the offline optimal algorithm over that of the
online algorithm. The offline algorithm is aware of all requests in
advance. 
If an
online algorithm is randomized, we use expected values for the output
of the online algorithm. 
The online algorithm is called
$1/\delta$-competitive, if for any instances, the profit of the
algorithm is at least $\delta$ times the offline optimal profit. 
So far the current model is exactly the same as $k$S2L-F
in~\cite{luo2018,luo2018car,luo2019car,LEX_kS2L}, although no
randomized cases were discussed in these papers. 

\medskip
\noindent
{\bf Simultaneous Decision Model.}
Recall that in the $k$S2L model, two (or more) inputs with \emph{the
same booking time} still have an order.
Thus we can equivalently think that if $r_1, \ldots, r_d$ are requests 
with booking time $t$, they are coming later than $t-1$ and before or
at $t$, one by one. Each of them should get a
decision (accept or reject) 
immediately before the next request. The adversary can change
$r_i$ after looking at the response of the online algorithm against $r_1,
\ldots, r_{i-1}$. 

This setting sounds reasonable as an online model, but the following
question seems also natural; what if requests with the same booking
time come exactly at the same time, the online player can see
all of them and  can  make decisions all together simultaneously at the booking moment
(equivalently the requests arrive in the same fashion as above but the
player can delay his/her online decisions
until the booking moment).
In this study
we also consider this
new model, denoted by $k$S2L-S. We further extend the
model, assuming that the number of requests with the same booking time
is at most $Rk$ for some constant $R$.
We call the generalized model $Rk$S2L-S.
Notice that having more than $k$ requests at the same location with the same booking time
never helps. Therefore, we only need to study the range $0\leq R\leq 2$
and $k$S2L-S means the special case that $R=2$. Our
algorithm for $Rk$S2L-S is \emph{adaptive} in the sense that it
automatically accommodates the value of $R$, which does not have to be
known in advance.

\medskip
\noindent
{\bf Prior work.}
The car-sharing problem has received a considerable amount of attention in recent years.
Luo et al.~\cite{luo2018} studied the problem with a single server and
two locations with both fixed booking time and variable
booking time. Here ``variable booking time'' essentially means that requests
with start time $t$ may come after requests with start time $t-1$.
 They gave lower bounds on the CR for both fixed and variable
booking time under the positive 
empty movement assumption. 
Later,
Luo et al.~\cite{luo2018car} studied the car-sharing problem with two servers and two locations, i.e. 2S2L.
They considered
only the problem with fixed booking time and proposed an online algorithm which
can achieve a tight bound of two.
Luo et al.~\cite{LEX_kS2L} studied the car-sharing problem with $k$
servers and two locations, for both fixed booking time ($k$S2L-F) and variable
booking time ($k$S2L-V). Namely they showed the CR is
at least 1.5 for all $k$ and at most 1.5 for $k=3i$ for $k$S2L-F and
at least $5/3$ for all $k$ and at most $5/3$ for $k=5i$ for
$k$S2L-V. 
Very recently, Luo et al.~\cite{luo2019car} studied the car-sharing
problem on a star network with $k$ servers as well as
two types of travel time: a unit travel time and an arbitrary travel time.

In comparison with the online setting,
B{\"o}hmov{\'a} et al.~\cite{bohmova2016} considered the offline car-sharing problem
in which all input requests are known in advance.
The objective is to minimize the number of vehicles
while satisfying all the requests.
The offline (i.e. static) problem can be solved in polynomial time.
Another closely related problem is the on-line dial-a-ride problem
(OLDARP),
where objects are transported between given points in a metric space.
The problem has been studied widely.
The goal is to minimize the total makespan~\cite{Ascheuer2000,Antje2017}
or the maximum flow time~\cite{Krumke2005}.
Christman et al.~\cite{christman2018} studied a variation of OLDARP where each request yields a revenue.
Yi et al.~\cite{Yi2005} studied another variation of OLDARP where each
request has a deadline, having a similar flavor as car sharing.

\medskip
\noindent
{\bf Our contribution.} 
Recall that the tight CR of $k$S2L-F is 1.5
for $k=3i$~\cite{LEX_kS2L} and 2 for
$k=2$~\cite{luo2018car}, but open for other $k$'s.
In this paper, we show that it is $\frac{2k}{k+\lfloor k/3\rfloor}$
for all $k\geq 2$ and 1.5 for all $k\geq 2$ if randomization is
allowed.  For $k$S2L-S that allows the online player to delay its
decision, it is shown that we can indeed take this advantage. Namely 
the tight CR for $k$S2L-S is $\frac{2k}{k+\lfloor
k/2\rfloor}$ for all $k\geq 2$ and 4/3 for all $k\geq 2$ if
randomization is allowed. 
For $Rk$S2L-S (we can assume $1\leq R\leq 2$
without loss of generality), it is shown that the CR is
 strictly improved if $R<2$, namely the tight CR
(for randomized algorithms) is improved to $(2+R)/3$. 
Note that if $R= 1.1$ 
(the number of requests at each stage exceeds $k$ by at most $10\%$),
the CR becomes at most $1.034$.

The basic idea of our algorithms is ``greedy'' and ``balanced''. Both notions
have already appeared in~\cite{LEX_kS2L}, but our implementation of
them is significantly different from theirs. More importantly, our
analysis is completely new; namely we use a simple mathematical
induction (augmented by two interesting parameters other than the
profit itself) while a classification of request types
was used in~\cite{LEX_kS2L}.

\begin{table}[h]
\normalsize
\begin{center}
\renewcommand{\arraystretch}{1.15}
\resizebox{\textwidth}{!}{
\begin{tabular}{|c|c|c|c|c|c|c|c|}
\hline
{\bf Problem} & \begin{tabular}[c]{@{}c@{}}{\bf Booking} \\ {\bf Time}\end{tabular} & {\bf Start Time}         & \begin{tabular}[c]{@{}c@{}}  {\bf The Cost Of} \\ {\bf Empty Move}\end{tabular} & \begin{tabular}[c]{@{}c@{}}{\bf Types of} \\ {\bf algorithms}\end{tabular} & \begin{tabular}[c]{@{}c@{}} {\bf Lower} \\ {\bf Bound}\end{tabular} & \begin{tabular}[c]{@{}c@{}} {\bf Upper} \\ {\bf Bound}\end{tabular} & {\bf Reference}        \\ \hline
2S2L    & Fixed                                                   & $t_i$               & $c=y$                                                & Deterministic                 & ---                                                    & 1                                                      & MFCS'18 ~\cite{luo2018car}  \\ \hline
2S2L    & Fixed                                                   & $t_i$               & 0                                                   & Deterministic              & 2                                                      & 2                                                      & MFCS'18 ~\cite{luo2018car}  \\ \hline
2S2L    & Fixed                                                   & $t_i$               & $0<c<y$                            & Deterministic             & 2                                                      & 2                                                      & MFCS'18 ~\cite{luo2018car}  \\ \hline
$k$S2L-F  & Fixed                                                   & $t_i=vt $   for $v \in \mathbb{N}$ & 0                      & Deterministic                                           & 1.5                                                    & $1.5   (k=3i, i\in \mathbb{N})$                                      & ISAAC'18 ~\cite{LEX_kS2L} \\ \hline
$k$S2L-F  & Fixed                                                   &
$t_i=vt $   for $v \in \mathbb{N}$ & 0                      &
Deterministic                                           &  $\frac{2k}{k
  + \lfloor k/3 \rfloor}$                   & $\frac{2k}{k + \lfloor
  k/3 \rfloor}$                                 & this  paper    \\ \hline
$k$S2L-F  & Fixed                                                   &
$t_i=vt $   for $v \in \mathbb{N}$ & 0                      &
Randomized              &  1.5                                               & 1.5                                 & this  paper    \\ \hline
$k$S2L-V  & Variant                                                 & $t_i=vt  $  for $v \in \mathbb{N}$ & 0                       & Deterministic                                          & 1.5                                                    & $1.5   (k=3i, i\in \mathbb{N})$                                      & ISAAC'18 ~\cite{LEX_kS2L} \\ \hline
$k$S2L-V  & Variant                                                 & $t_i=vt   $ for $v \in \mathbb{N}$ & 0                       & Deterministic                                          & 5/3                                                    & $5/3   (k=5i, i\in \mathbb{N})$                                      & ISAAC'18 ~\cite{LEX_kS2L} \\ \hline
\hline
$k$S2L-S     & Fixed
& $t_i=vt   $ for $v \in \mathbb{N}$ & 0                      &
Deterministic                                           &  
$\frac{2k}{k + \lfloor k/2\rfloor}$               &  $\frac{2k}{k + \lfloor k/2\rfloor}$                    & this   paper     \\ \hline
$k$S2L-S     & Fixed                                                   & $t_i=vt   $ for $v \in \mathbb{N}$ & 0                      & Randomized                                           & 4/3                                                     & 4/3                                                    & this   paper
\\ \hline
$Rk$S2L-S ($1\le R \le 2$)     & Fixed                                                   & $t_i=vt$ for $v \in \mathbb{N}$ & 0                      & Randomized                                            & (2+$R$)/3                                                    & (2+$R$)/3                                                    & this   paper     \\ \hline

\end{tabular}}
\end{center}
\caption{Overview of known and new results}
\label{table_summary}
\end{table}

The merit of our new analysis is demonstrated more clearly in $k$S2L-S
than in the original $k$S2L-F.  Therefore we present the results for
$k$S2L-S first and then those for $k$S2L-F; the deterministic case in
Section 2 and the randomized case in Section 3. $Rk$S2L-S is discussed
in Section 4, where we introduce two magic numbers calculated from the
number of requests in each stage.  Finally all matching lower bounds are
given in Section 5.

\section{Deterministic algorithms}

As mentioned in the previous section, we first discuss the basic GBA that
works for the $k$S2L-S model and then its accept/reject version
that works for the original $k$S2L-F model. The analysis for the
former will carry over to that of the latter pretty well. The
following table summarizes our notations which are used in the rest of
the paper.

\begin{table}[H]
\begin{tabular}{ll}
\textbf{Notation} &                                                                         \\
$k$       &The number of total servers \\
$(0, 1)$:           & Requests from location 0 to 1                                         \\
$(1, 0)$:           & Requests from location 1 to 0
\\
$I\kern-.07em\ell_i$:             & The number of (0,1)'s requested in stage $i$ with start time $i$ \\
$Ir_i$:             & The number of (1,0)'s requested in stage $i$ with start time $i$         \\
$G\ell_i$:           & The number of (0,1)'s accepted by the algorithm in stage $i$             \\
$Gr_i$:           & The number of (1,0)'s accepted by the algorithm in stage $i$             \\
$G\kern-.1emf_i$:           & The number of servers not used, i.e., $k-Gr_i-G\ell_i$ \\
$O\ell_i$:         & The number of (0,1)'s accepted by OPT in stage $i$              \\
$Or_i$:         & The number of (1,0)'s accepted by OPT in stage $i$              \\
$O\kern-.1emf_i$:         & The number of servers not used, i.e., $k-Or_i-O\ell_i$ \\
\end{tabular}
\end{table}

\longdelete{
The idea of this greedy balanced algorithm can be illustrated as follows.
All servers are at location 0 in stage 0.
An arbitrary number of servers can be moved to location 1 if needed in stage 1.
We can think of all servers as ``floating'', available at either
location 0 or 1 in stage 1. Let $k=100$ and suppose requests in stage 1
are $(\il_1, Ir_1)=(100,100)$. Thanks to the floating
servers, we can assign 100
servers freely to locations 0 and 1, 
and hence accepted requests, $(\gl_1,\gr_1)$, can be anything like 
(100,0), (75,25) or (0,100). 
However, if (100,0) is selected, then the server allocation in stage 2
is $\bigl[0,0,100\bigr]$.
Here we use notation $\bigl[x,y,z\bigr]$ to indicate $x$ and $z$ servers at locations
0 and 1, respectively and $y$ servers floating. So
the adversary would send
$(\il_2, Ir_2)=(100,0)$ for stage 2, and no servers are
available for the online player. Since the almighty
adversary can select (0,100) in
stage 1, the CR would be 2.
Thus one can easily see that the best thing an algorithm can do is to accept
$(\gl_1, \gr_1)=(50,50)$ in stage 1 to secure a CR of 1.5.
This is the notion of ``\emph{Balanced}''. 
 Note that requests denoted by
$(I\kern-.07em\ell_2, Ir_2)=(100,0)$ have booking time 1, but a
booking time of requests (and when they actually come) 
is no longer important once we know $(\il_i, Ir_i)$ and the server
allocation $\bigl[\gr_{i-1}, \gf_{i-1}=k-\gl_{i-1}-\gr_{i-1}, \gl_{i-1}\bigr]$
in stage $i$. Thus we will simply use ``requests for stage
$i$'' without mentioning their booking time.
}

Suppose there are $x$ and $x'$ servers at location 0 and $y$ and $y'$ servers at location 1 at time $i$, 
where $x$ servers will serve (0,1)'s with start time $i$ but $x'$ servers are not used.
Similarly, $y$ servers will serve (1,0)'s but $y'$ servers not.
Then at time $i+1$ we can use $x$ servers for (1,0)'s and $y$ servers for (0,1)'s. 
Furthermore, $x'+y'$ servers are available for requests of both directions since the requests with start time $i+1$ come at time $i$ and we can move $x'+y'$ servers to whichever locations as we like. 
Now we introduce {\em stages} and say we have $x$~$(=G\ell_i)$ servers at location 1, $y$~$(=Gr_i)$ servers at location 0,
and $x'+y'$ $(=G\kern-.1emf_i)$ ``floating'' servers in stage $i+1$. 
We also denote the {\em server allocation} at time $i+1$ as $\bigl[y,x'+y',x \bigr]$ using the new notation $\bigl[-,-,- \bigr]$. 
Note that we need to have artificial $Gr_0, G\ell_0$ and $G\kern-.1emf_0$ for stage 1 whose values are 0, 0, and $k$, respectively. 

Now the idea of this new greedy balanced algorithm can be illustrated as follows.  
Let $k=100$ and suppose requests in stage 1 are $(\il_1, Ir_1)=(100,100)$. 
As described above, the server allocation in stage 1 is $\bigl[0,100,0\bigr]$, and hence accepted requests, $(\gl_1,\gr_1)$, can be anything like (100,0), (75,25) or (0,100). 
However, if (100,0) is selected, then the server allocation in stage 2 is $\bigl[0,0,100\bigr]$ and the adversary would send $(\il_2, Ir_2)=(100,0)$ for stage 2, 
by which no servers are available for the online player.  Since the almighty adversary can select (0,100) in
stage 1
and he/she can accept all the requests in stage 2, 
the CR would be 2.
Thus one can easily see that the best thing an algorithm can do is to accept $(\gl_1, \gr_1)=(50,50)$ in stage 1 to secure a CR of 1.5.
This is the notion of ``\emph{Balanced}''.
Note that requests denoted by $(I\kern-.07em\ell_2, Ir_2)=(100,0)$ have booking time 1, 
but a booking time of requests (and when they actually come)
is no longer important once we know $(\il_i, Ir_i)$ and the server allocation $\bigl[\gr_{i-1}, \gf_{i-1}=k-\gl_{i-1}-\gr_{i-1}, \gl_{i-1}\bigr]$ in stage $i$. 
Thus we will simply use ``requests for stage $i$'' without mentioning their booking time.

What if $(I\kern-.07em\ell_1, Ir_1)=(60,20)$? In this
case, 
 $(\gl_1, \gr_1)=(60,20)$ 
is the best, i.e., the strategy is
a simple ``\emph{Greedy}'' one. If $(I\kern-.07em\ell_1, Ir_1)=(100,30)$, our
selection is 
 $(\gl_1, \gr_1)=(70,30)$, 
namely ``Greedy'' but as
``Balanced'' as possible. Algorithm~\ref{gba} realizes this idea
almost as it is and it will also be a core of all the subsequent
algorithms in this paper.

\begin{algorithm}
  \caption{GBA($k$): Greedy Balanced Algorithm}
  \label{gba}
  \begin{algorithmic}[1]
  \Require $\il_i$ and $Ir_i$ are the numbers of (0,1)'s and (1,0)'s in stage
  $i$, respectively.
           An integer $k$ is the number of total servers. 
 The server allocation at the beginning of stage $i$ is
$\bigl[\gr_{i-1},\gf_{i-1}, \gl_{i-1}\bigr]$, namely 
$\gr_{i-1}$ and $\gl_{i-1}$ servers at locations 0 and
           1, respectively and $\gf_{i-1}=k-\gl_{i-1}-\gr_{i-1}$ floating servers.
  \Ensure 
 $\gl_i$ and $\gr_i$ are the numbers of accepted (0,1)'s and
          (1,0)'s, respectively.
  \If{$\gr_{i-1}+\gf_{i-1} \leq \lfloor k/2 \rfloor$ or $\il_i \leq \lfloor k/2 \rfloor$}
  \State $\gl_i \leftarrow \min\{\il_i, \gr_{i-1}+\gf_{i-1}\}$; $\gr_i
  \leftarrow\min
  \{\ir_i,\gl_{i-1}+\gf_{i-1}, k-\gl_i \}$;
  \Else
   \If{$\gl_{i-1}+\gf_{i-1} \leq \lfloor k/2 \rfloor$ or $\ir_i \leq \lfloor k/2 \rfloor$ }
   \State $\gr_i\leftarrow\min\{\ir_i, \gl_{i-1}+\gf_{i-1}\}$;\; $\gl_i\leftarrow\min \{\il_i, \gr_{i-1}+\gf_{i-1}, k-\gr_i \}$;
   \Else
   \State  $\gr_i\leftarrow \lfloor k/2 \rfloor$;\; $\gl_i\leftarrow\lceil k/2 \rceil$;
   \EndIf
  \EndIf
  \State \Return $G\ell_i$ and $Gr_i$
\end{algorithmic}
\end{algorithm}

Recall that floating servers in stage $i$ are actually sit at location 0 or 1
at time $i-1$, say 30 ones at location 0 and 10 at location 1 among 40
floating ones. So if we need 20 floating servers in stage $i$ at
location 1, we need to move 10 servers from location 0 to 1 using the
duration from time $i-1$ to $i$. However, we do not describe this
empty movement in our algorithms since it is easily seen and its cost is free in our model.

Observe that the greedy part of Algorithm~\ref{gba} appears in lines 1
and 2 for (0,1)'s and in lines 4 and 5 for (1,0)'s. If
the condition in line 1 is met, then we accept (0,1)'s until we have
exhausted the servers or the (0,1)'s available in this stage. After
that we accept as many (1,0)'s as possible.
It works similarly for lines 4 and
5. If neither the condition in line 1 nor the one in line 4 is met, we
just split the requests almost evenly in line 7. The following theorem shows
that GBA achieves the optimal $\frac{4}{3}$-competitiveness for all even $k$ and approaches this value when $k$ is a large odd number.
In the rest of this paper, we use ALG to denote an online algorithm
and OPT an offline optimal scheduler in general.

\begin{figure}[t]
\centering
\includegraphics[scale=0.6]{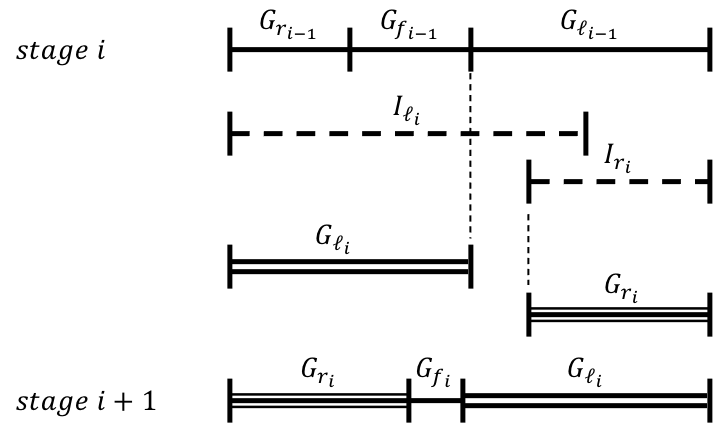}
\caption{Server allocation in GBA}
\label{fig:gba}
\end{figure}

\begin{theorem}
\label{thm:gba}
 GBA is a $1/\delta$-competitive algorithm for $k$S2L-S for any $k\geq 2$, where $\delta = \frac{k+ \lfloor k/2 \rfloor}{2k}$.
\end{theorem}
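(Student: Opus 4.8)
The plan is to drop the irrelevant profit scale (set $y=1$) and write $\mathrm{GBA}_i=\sum_{j\le i}(\gl_j+\gr_j)$ and $\mathrm{OPT}_i=\sum_{j\le i}(\ol_j+\orr_j)$, so that the target becomes $\mathrm{GBA}_n\ge\delta\cdot\mathrm{OPT}_n$ at the last stage $n$. First I would record the structural facts the argument rests on. Both schedulers obey the obvious capacity constraints --- for GBA, $\gl_i\le\min\{\il_i,\,k-\gl_{i-1}\}$, $\gr_i\le\min\{\ir_i,\,k-\gr_{i-1}\}$ and $\gl_i+\gr_i\le k$, and the same with $\gl,\gr$ replaced by $\ol,\orr$ for OPT --- and both start from $\gl_0=\gr_0=\ol_0=\orr_0=0$. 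Going branch by branch through Algorithm~\ref{gba}, one checks that GBA is \emph{maximally greedy},
$$\gl_i+\gr_i=\min\bigl\{k,\ \min\{\il_i,\,k-\gl_{i-1}\}+\min\{\ir_i,\,k-\gr_{i-1}\}\bigr\},$$
(in line~7 all of $\il_i,\ir_i,k-\gl_{i-1},k-\gr_{i-1}$ exceed $\lfloor k/2\rfloor$, so $\min\{\il_i,k-\gl_{i-1}\}+\min\{\ir_i,k-\gr_{i-1}\}\ge k$ automatically), and that, subject to this total, GBA stays as balanced as the requests allow: $\gl_i\le\lfloor k/2\rfloor$ in the line-1/2 branch, $\gr_i\le\lfloor k/2\rfloor$ in the line-4/5 branch, and $(\gl_i,\gr_i)=(\lceil k/2\rceil,\lfloor k/2\rfloor)$ in line~7.

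The heart of the proof is an induction on the stage index $i$, strengthened by two parameters --- one per travel direction --- comparing OPT's residual capacity with $\delta$ times GBA's for the coming stage:
$$u_i=\delta(k-\ol_i)-(k-\gl_i),\qquad v_i=\delta(k-\orr_i)-(k-\gr_i).$$
The invariant I would carry is $\mathrm{GBA}_i-\delta\cdot\mathrm{OPT}_i\ \ge\ \max\{0,\,u_i,\,v_i\}$. Its right-hand side is exactly the worst single-stage harm the adversary can still inflict: a stage of only $(0,1)$'s lets OPT gain $k-\ol_i$ while GBA greedily gains $k-\gl_i$, a $\delta$-weighted loss of $u_i$; a stage of only $(1,0)$'s gives $v_i$; a stage with at least $k$ requests in \emph{both} directions lets both schedulers fill to $k$, costing GBA nothing; and the adversary may simply stop --- hence the maximum, not the sum. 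The base case $i=0$ holds since $u_0=v_0=(\delta-1)k\le0$. For the step, from $\mathrm{GBA}_{i+1}-\delta\,\mathrm{OPT}_{i+1}=(\mathrm{GBA}_i-\delta\,\mathrm{OPT}_i)+(\gl_{i+1}+\gr_{i+1})-\delta(\ol_{i+1}+\orr_{i+1})$ and the inductive hypothesis, it suffices to show that $\max\{0,u_i,v_i\}+(\gl_{i+1}+\gr_{i+1})-\delta(\ol_{i+1}+\orr_{i+1})$ is at least each of $0$, $u_{i+1}$ and $v_{i+1}$, where $(\gl_{i+1},\gr_{i+1})$ is GBA's output and $(\ol_{i+1},\orr_{i+1})$ is an arbitrary feasible OPT move. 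The theorem then follows from $\max\{0,u_n,v_n\}\ge0$.

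The main obstacle is precisely this three-part step inequality. Substituting the greedy formula for $\gl_{i+1}+\gr_{i+1}$ and the analogous bound $\ol_{i+1}+\orr_{i+1}\le\min\{k,\min\{\il_{i+1},k-\ol_i\}+\min\{\ir_{i+1},k-\orr_i\}\}$, each of the three parts turns into a piecewise-linear inequality in $\il_{i+1},\ir_{i+1}$ and the eight configuration variables, and one must case-split on which GBA branch fires (which $\min$'s are active, whether line~7 applies), on the signs of $u_i,v_i,u_{i+1},v_{i+1}$ (which $\max$-terms are binding), and on OPT's direction preference. The point the parameter choice is built around --- and the step one must get right --- is that whenever OPT out-accepts GBA in the attacked direction it thereby \emph{unbalances its own configuration}, so the fresh $u_{i+1}$ (or $v_{i+1}$) it creates is no larger than the slack it just consumed; carrying the maximum, not the sum, of the two directional parameters is what keeps this accounting tight. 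I expect the case work to be routine but bulky, with line~7 and the odd-$k$ tie-break (where $\lceil k/2\rceil\neq\lfloor k/2\rfloor$ and GBA favors $(0,1)$) being the fiddliest spots.
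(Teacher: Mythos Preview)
Your proposal is essentially the paper's own proof in different packaging: your invariant $\mathrm{GBA}_i-\delta\,\mathrm{OPT}_i\ge\max\{0,u_i,v_i\}$ is exactly the paper's three simultaneous inductive inequalities $A_i\ge\delta B_i$, $X_i\ge\delta Y_i$, $U_i\ge\delta V_i$ once you unfold $X_i=A_i+(k-\gl_i)$, $Y_i=B_i+(k-\ol_i)$ (and similarly for $U_i,V_i$), and your interpretation of $u_i,v_i$ as the cost of an all-$(0,1)$ or all-$(1,0)$ penalty stage matches the paper's motivation for $X_i,Y_i,U_i,V_i$. The one place the paper will save you effort is the case analysis: rather than splitting on branches, signs of $u,v$, and OPT's direction, it simply records which constraint is tight for $\gl_i$ (cases (L1)--(L3)) and for $\gr_i$ (cases (R1)--(R3)), and each of the nine combinations reduces in two lines to one of the three hypotheses at stage $n-1$ --- much lighter than the ``routine but bulky'' work you are bracing for.
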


\begin{proof}

In order to prove the theorem, we consider the following six key values:
$$A_i=\sum_{j=1}^{i} (\gr_j+\gl_j),
\quad  B_i=\sum_{j=1}^{i} (Or_j+\ol_j),$$
 $$X_i=A_i + Gr_i + Gf_i, \quad Y_i=B_i+Or_i + Of_i,$$
$$U_i=A_i + G\ell_i + Gf_i, \quad V_i=B_i+\ol_i + Of_i.$$  
Our goal is to bound $A_i$ by $B_i$. To do so, it
is popular to use a potential function for competitive analysis, which is typically the difference between configurations of ALG and OPT.
In our present case, it may be the difference between server allocations of GBA and OPT.
It turns out, however, that this configuration difference or a similar one is unlikely to work since we still have a freedom for server selection which is not controlled by this difference strongly.
Instead we introduce four parameters, $X_i$, $Y_i$, $U_i$ and $V_i$, which play a key role in our proof.
Note that $X_i$ and $Y_i$ denote the total revenue of GBA and OPT respectively
for the first $i+1$ stages
assuming that the adversary tries to penalize the algorithm choice by
introducing 
 $k$ (0,1)'s in stage $i+1$;  
the last two values, $U_i$ and $V_i$, denote the total revenue of GBA and OPT respectively
for the first $i+1$ stages
assuming that the adversary tries to penalize the algorithm choice by
introducing 
 $k$ (1,0)'s in stage $i+1$. 
Intuitively, GBA
balances the accepted requests in both directions and guarantees that
the CR's in these two instances ($\frac{Y_i}{X_i}$ and
$\frac{V_i}{U_i}$, respectively) are not too large. It turns out that
taking care of these two extreme instances is sufficient to keep the
CR low for all instances.

In order to prove that the algorithm is
$1/\delta$-competitive,
we show that the set of the following inequalities (i) to (iii), denoted by $S(n)$,
$$\text{ (i) } A_n \geq \delta B_n, \quad \text{ (ii) }  X_n \geq \delta Y_n, \quad \text{ (iii) } U_n \geq \delta V_n$$
hold for every $n$ by induction.

For the base case, $n=0$, we have $A_0=B_0=Gr_0=G\ell_0=Or_0=O\ell_0=0$ and
$Gf_0=Of_0=k$. Thus the three inequalities hold since $\delta \leq 1$.

Now the main part of the proof is proving $S(n)$ assuming, as stated in the
induction hypothesis, that $S(j)$
holds for all $0\leq j\leq n-1$. Note that we can rewrite $A_i$,
$B_i$, $X_i$ and so on as follows:
$$A_i=A_{i-1}+G\ell_i+Gr_i, \quad  B_i=B_{i-1}+O\ell_i+Or_i,$$
$$X_i=A_{i-1}+ k + Gr_i, \quad Y_i=B_{i-1}+ k +Or_i,$$
$$U_i=A_{i-1}+k + G\ell_i, \quad V_i=B_{i-1}+k+\ol_i.$$
Since $O\ell_i\leq \min \{ \il_i, Or_{i-1}+Of_{i-1} \}$ and
$Or_i\leq \min \{ \ir_i, O\ell_{i-1}+Of_{i-1} \}$, the
following lemma is obvious, but will be used frequently.

\begin{lemma}
\label{optvalue}
$O\ell_i\leq \il_i$, $\ol_i \leq Or_{i-1}+Of_{i-1}$, $\orr_i \leq \ir_i$, and $\orr_i \leq O\ell_{i-1}+Of_{i-1}$.
\end{lemma}

Now we are ready to prove the theorem.
Suppose line 2 is executed. Then the following (L1) or (L2) holds for
the value of $\gl_i$ and (R1), (R2) or (R3) for the value of
$\gr_i$. Similarly if line 5 is executed, (R1) or (R2) holds for
$\gr_i$ and (L1), (L2) or (L3) for $\gl_i$.
\begin{eqnarray*}
&&\text{(L1)}\: \gl_i=\il_i,\:\: \text{(L2)}\: \gl_i=\gr_{i-1}+\gf_{i-1},\: \text{(L3)}\:
\gl_i =k-\gr_i (\geq \lfloor k/2 \rfloor),\\
&&\text{(R1)}\: \gr_i=\ir_i,\: \text{(R2)}\: \gr_i=\gl_{i-1}+\gf_{i-1},\: \text{(R3)}\:
\gr_i =k-\gl_i (\geq \lfloor k/2 \rfloor).
\end{eqnarray*}
Note that the condition ``$\geq \lfloor k/2 \rfloor$'' in (L3) and (R3)
comes from the conditions in lines 1 and 4, respectively. Now we consider stage $n$ and show that if
(L1), (L2) or (L3) holds, the induction (iii) holds, if (R1), (R2)
or (R3) hold, the induction (ii) holds and if any one of the nine
combinations \{(L1), (L2), (L3)\}$\times$\{(R1), (R2), (R3)\} holds,
(i) holds.  First suppose (L1) holds.  Then since $\ol_n\leq
\il_n$ by Lemma~\ref{optvalue}
\begin{eqnarray*}
&&U_n = A_{n-1}+ k + \gl_n=  A_{n-1}+ k + \il_n,\\
&&V_n = B_{n-1}+k+\ol_n\leq B_{n-1}+k+\il_n.
\end{eqnarray*}
Thus (iii) is true by the induction hypothesis on (i). Similarly for
(L2), i.e.,  by Lemma~\ref{optvalue}
\begin{eqnarray*}
&&U_n = A_{n-1}+ k + \gl_n=  A_{n-1}+ k + \gr_{n-1}+\gf_{n-1}, \\
&&V_n = B_{n-1}+k+\ol_n\leq B_{n-1}+k+ \orr_{n-1}+\of_{n-1}.
\end{eqnarray*}
Thus (iii) is proved by the hypothesis on (ii). Finally for (L3),
\begin{eqnarray*}
&&U_n = A_{n-1}+ k + \gl_n\geq  A_{n-1}+ k + \lfloor k/2 \rfloor, \\
&&V_n = B_{n-1}+k+\ol_n\leq B_{n-1}+k+ k,
\end{eqnarray*}
then use the hypothesis on (i) to claim (iii).

The proof that (R1) or (R2) or (R3) implies (ii) is similar and
omitted.

Finally we show that (i) follows from any combination. Observe
that (L1) and (R1) obviously implies (i)
since no algorithms accept more requests than requested.
All combinations including (L3)
or (R3) are also obvious since GBA accepts $k$
requests. Similarly for (L2) and (R2) when
$\gf_{i-1}=0$ (otherwise impossible). The remaining cases are (L1) and
(R2), and (L2) and (R1).  For the former, by Lemma~\ref{optvalue}
\begin{eqnarray*}
&&A_n = A_{n-1}+ \gl_n+\gr_n = A_{n-1}+\il_n+\gl_{n-1}+\gf_{n-1},\\
&&B_n = B_{n-1}+ \ol_n+\orr_n \leq B_{n-1}+\il_n+\ol_{n-1}+\of_{n-1},
\end{eqnarray*}
and we can use the hypothesis on (iii). (L2) and (R1) is similar and
omitted.

What remains is the case that line 7 is executed. Observe that line 7
gives us $\gl_n\geq \lfloor k/2 \rfloor$, $\gr_n\geq \lfloor k/2
\rfloor$ and $\gl_n+\gr_n=k$. We have already shown that the first one
implies (iii), the second one (ii) and 
the third one means all the servers accept requests and is obviously
enough for (i).
Thus the theorem is proved.
\end{proof}

As seen in GBA and its analysis, a dangerous situation for the
online player is that ALG accepts too many requests of one direction
when it is possible. If ALG knows the total number of requests in each
direction in advance, we can avoid this situation rather easily. Now we discuss
$k$S2L-F, in which ALG does not know the total number of requests in
advance. A simple and apparent solution is to stop accepting requests
of one direction when its number gets to some value, even if more
requests of that direction are coming and could be accepted. In the next algorithm, ARGBA,
we set this value as $2k/3$. It then turns out, a little surprisingly,
that the analysis for Theorem~\ref{thm:gba} is also available for the
new algorithm almost as it is.

\longdelete{
\begin{algorithm}[H]
  \caption{ARGBA($k$): Accept or reject GBA}
  \label{argba}
  \begin{algorithmic}[1]
   \Require  The server location is $\bigl[\gr_{i-1},
    \gf_{i-1}=k-\gl_{i-1}-\gr_{i-1},\gl_{i-1}\bigr]$
   at the
  beginning of this stage $i$. $r_i^1, r_i^2,$ $\ldots, r_i^j,
  \ldots, r_i^m$ are a sequence of requests, each of which is (0,1) or (1,0) in this stage. (The algorithm does not know the value of $m$ in advance.) $k$ is the total number of servers.
  \Ensure immediate accept or reject for $r_i^j$. $\gl_i$ and $\gr_i$
  for the next server allocation.
  \State $\gl_i\leftarrow 0$; $\gr_i\leftarrow 0$, $A\ell^{0}\leftarrow 0$; $Ar^{0}\leftarrow 0$
  \While{there are still requests in stage $i$, i.e. $j \leq m$}
  \State Let $A\ell^{j-1}$ ($Ar^{j-1}$, resp.) be the number of
  (0,1)'s ((1,0)'s, resp.) in $r_i^1,\ldots, r_i^{j-1}$;
  \If{$r_i^j$ is (0,1)}
    \If{$A\ell^{j-1}<\gr_{i-1}+\gf_{i-1}$ and $A\ell^{j-1}<2k/3$ and
    $\gl_i+\gr_i<k$}
    \State accept $r_i^j$; $\:\gl_i \leftarrow \gl_i+1$; $j \leftarrow j+1$; 
    {\bf continue}
    \EndIf
  \Else $\:\:$ (namely, $r_i^j$ is (1,0))
    \If{$Ar^{j-1}<\gl_{i-1}+\gf_{i-1}$ and $Ar^{j-1}<2k/3$ and
   $\gl_i+\gr_i<k$}
    \State  accept $r_i^j$; $\:\gr_i \leftarrow \gr_i+1$; $j \leftarrow j+1$; 
    {\bf continue}
    \EndIf
  \EndIf
  \State reject $r_i^j$; $j \leftarrow j+1$;
  \EndWhile
  \State \Return $G\ell_i$ and $Gr_i$
\end{algorithmic}
\end{algorithm}
}

\begin{algorithm}[h]
  \caption{ARGBA($k$): Accept or reject GBA}
  \label{argba}
  \begin{algorithmic}[1]
   \Require  The server location is $\bigl[\gr_{i-1},
    \gf_{i-1}=k-\gl_{i-1}-\gr_{i-1},\gl_{i-1}\bigr]$
   at the
  beginning of this stage $i$.
Requests are coming sequentially, each of which, $r$, is (0,1) or
(1,0). $k$ is the total number of servers.
  \Ensure Immediate accept or reject for $r$. $\gl_i$ and $\gr_i$
  for the next server allocation.
  \State $\gl_i\leftarrow 0$; $\gr_i\leftarrow 0$,
  $A\ell_i\leftarrow 0$; $Ar_i\leftarrow 0$ ($A\ell_i$
  ($Ar_i$, resp.) is the number of
  (0,1)'s ((1,0)'s, resp.) received in this stage so far.)
  \While{a new request $r$ comes}
  \If{$r$ is (0,1)}
  \State $A\ell_i \leftarrow A\ell_i+1$;
    \If{$A\ell_i<\gr_{i-1}+\gf_{i-1}$ and $A\ell_i<2k/3$ and
    $\gl_i+\gr_i<k$}
    \State accept $r$; $\:\gl_i \leftarrow \gl_i+1$;
    \Else
    \State reject $r$;
    \EndIf
  \Else $\:\:$ (namely, $r$ is (1,0))
  \State $Ar_i \leftarrow Ar_i+1$;
    \If{$Ar_i<\gl_{i-1}+\gf_{i-1}$ and $Ar_i<2k/3$ and
   $\gl_i+\gr_i<k$}
    \State  accept $r$; $\:\gr_i \leftarrow \gr_i+1$;
    \Else
    \State reject $r$;
    \EndIf
  \EndIf
  \EndWhile
  \State \Return $G\ell_i$ and $Gr_i$
\end{algorithmic}
\end{algorithm}

\begin{theorem}
\label{thm:argba}
ARGBA is a $1/\delta$-competitive algorithm for $k$S2L-F for any
$k\geq 2$, where $\delta=\frac{k + \lfloor k/3 \rfloor}{2k}$.
\end{theorem}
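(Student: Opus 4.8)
The plan is to re-run, almost verbatim, the inductive argument of Theorem~\ref{thm:gba}. Attach to ARGBA and OPT the same six quantities $A_i,B_i,X_i,Y_i,U_i,V_i$, with the same recurrences $A_i=A_{i-1}+\gl_i+\gr_i$, $X_i=A_{i-1}+k+\gr_i$, $U_i=A_{i-1}+k+\gl_i$ (and the $B,Y,V$ analogues), and prove by induction on $n$ the statement $S(n)$: (i) $A_n\geq\delta B_n$, (ii) $X_n\geq\delta Y_n$, (iii) $U_n\geq\delta V_n$, now with $\delta=\frac{k+\lfloor k/3\rfloor}{2k}$. The base case $n=0$ and Lemma~\ref{optvalue} are used unchanged.

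Everything hinges on replacing the case analysis for the value $\gl_n$ (and symmetrically $\gr_n$) produced by one stage of ARGBA. A $(0,1)$ request is accepted exactly while the three guards $A\ell_i<\gr_{i-1}+\gf_{i-1}$, $A\ell_i<2k/3$, $\gl_i+\gr_i<k$ all hold, and each controlling quantity is monotone within a stage, so at the end of stage $n$ one of the following holds: (L1) $\gl_n=\il_n$; (L2) $\gl_n=\gr_{n-1}+\gf_{n-1}$; (L3) $\gl_n\geq\lfloor k/3\rfloor$ --- because acceptance stopped at the cap, in which case $\gl_n$ is essentially $2k/3\geq\lfloor k/3\rfloor$, or at $\gl_n+\gr_n=k$, in which case $\gl_n=k-\gr_n\geq k-\lceil 2k/3\rceil=\lfloor k/3\rfloor$ since $\gr_n$ is itself bounded by the cap. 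A symmetric trichotomy (R1), (R2), (R3) holds for $\gr_n$. This is exactly the structure of (L1)--(L3)/(R1)--(R3) in Theorem~\ref{thm:gba}, the only change being that the ``large'' lower bound is $\lfloor k/3\rfloor$ rather than $\lfloor k/2\rfloor$.

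The implications then close as in the proof of Theorem~\ref{thm:gba}. Each of (L1),(L2),(L3) gives (iii) and each of (R1),(R2),(R3) gives (ii): (L1) via $\ol_n\leq\il_n$ and the hypothesis on (i); (L2) via $\ol_n\leq\orr_{n-1}+\of_{n-1}$ and the hypothesis on (ii); (L3) via $\ol_n\leq k$ and the hypothesis on (i), where $U_n=A_{n-1}+k+\gl_n\geq\delta B_{n-1}+k+\lfloor k/3\rfloor$ beats $\delta V_n\leq\delta(B_{n-1}+2k)$ precisely because $k+\lfloor k/3\rfloor\geq 2\delta k$, i.e.\ $\delta\leq\frac{k+\lfloor k/3\rfloor}{2k}$ --- this is the inequality, tight at equality, that fixes $\delta$. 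Finally (i) follows from every one of the nine combinations just as before: those in which $\gl_n+\gr_n=k$ (including the total-limited branch of (L3) or (R3)) are trivial; (L1) with (R1) is trivial; (L1) with (R2) and (L2) with (R1) use the hypothesis on (iii) and on (ii); (L2) with (R2) forces $\gf_{n-1}=0$ hence $\gl_n+\gr_n=k$; and the combinations with a cap-limited (L3) or (R3) reduce to $\delta\ol_n\leq\gl_n$ (resp.\ $\delta\orr_n\leq\gr_n$), which holds because a cap-limited $\gl_n$ is at least $\delta k\geq\delta\ol_n$.

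The one delicate point --- and the step I expect to be the real obstacle --- is getting case (L3) quantitatively right: one must check that when acceptance in a direction stops for a reason other than ``out of requests'' or ``out of available servers'', the count accepted is not merely $\geq\lfloor k/3\rfloor$ but, in the combinations that feed (i), actually $\geq\delta k=\frac{k+\lfloor k/3\rfloor}{2}$. Here the exact rounding hidden in the threshold ``$2k/3$'' is what matters: the threshold must be tuned so a cap-limited direction still collects $\lceil\delta k\rceil$ requests, since otherwise the alternating all-$(0,1)$-then-all-$(1,0)$ instance of Section~5 already forces a ratio above $1/\delta$. With the threshold set so that the cap-limited count is $\geq\delta k$, every inequality above closes and the induction --- hence the theorem --- follows.
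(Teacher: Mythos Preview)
Your approach is exactly the paper's: the same six quantities, the same induction $S(n)$, and the same case split on why acceptance in each direction stopped. The paper writes four $L$-cases (keeping ``cap-limited'' $\gl_n=\lceil 2k/3\rceil$ and ``total-limited'' $\gl_n+\gr_n=k$ as separate (L3), (L4)) where you write three and then re-split, but this is cosmetic.

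Two small points. First, for (i) in the cap/total cases the paper's bookkeeping is simpler than your per-direction reduction: it just observes that whenever (L3), (L4), (R3) or (R4) holds one has $\gl_n+\gr_n\geq\lceil 2k/3\rceil$, and since OPT gains at most $k$ in the stage, hypothesis~(i) plus $\lceil 2k/3\rceil\geq\delta k$ finishes. Second, your last paragraph correctly isolates the only nontrivial arithmetic --- that a cap-limited count must be at least $\delta k$ --- but you leave it conditional (``with the threshold set so that\dots''). You should actually close it: the cap gives $\gl_n=\lceil 2k/3\rceil$, and checking $k\equiv 0,1,2\pmod 3$ shows $\lceil 2k/3\rceil\geq\frac{k+\lfloor k/3\rfloor}{2}=\delta k$ in all three cases (with equality at $k=3m$). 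Once that line is written, your proof is complete and coincides with the paper's.
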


\begin{proof}
Observe that once a (0,1) (similarly for (1,0)) is rejected,
then subsequent ones are all rejected. Let $X_i$ and $Y_i$ be the last
(0,1) and (1,0), respectively, that are accepted and $\gl_i$
and $\gr_i$ be their numbers after lines 6 and 10,
respectively.  Also, let $\il_i$ and $\ir_i$ be the total numbers of
(0,1)'s and (1,0)'s in stage $i$, respectively (only used for analysis).
Suppose the last accepted (0,1) has gone through the
conditions in line 5.
Then, one can see that the next (0,1), if any, is blocked by one of these
conditions and thus
 one of the following four conditions, (L1) through
(L4), is met.  Similarly for (1,0)'s, one of (R1)
 through (R4) is met. 
\begin{eqnarray*}
&&\text{(L1)}\: \gl_i=\il_i,\:\: \text{(L2)}\: \gl_i=\gr_{i-1}+\gf_{i-1},\: \text{(L3)}\:
\gl_i =\lceil 2k/3 \rceil,\\
&&\qquad\qquad\qquad\qquad \text{(L4)}\: \gl_i+\gr_i=k \text{ and }\gl_i\geq
\lfloor k/3\rfloor \text{ and } \gr_i\geq \lfloor k/3\rfloor, \\
&&\text{(R1)}\: \gr_i=\ir_i,\: \text{(R2)}\: \gr_i=\gl_{i-1}+\gf_{i-1},\: \text{(R3)}\:
\gr_i =\lceil 2k/3 \rceil,\\
&&\qquad\qquad\qquad\qquad \text{(R4)}\: \gl_i+\gr_i=k \text{ and }\gl_i\geq
\lfloor k/3\rfloor \text{ and } \gr_i\geq \lfloor k/3\rfloor.
\end{eqnarray*}
Note that the lower bound condition in (L4) and (R4) is correct since
otherwise the second condition in line 5 or 11 should have been met before.

Now consider stage $n$. In a way similar to the proof of
Theorem~\ref{thm:gba}, we can show that one of (L1) to (L4)
(with subscript $n$ replacing $i$) implies the induction (iii).
In fact, the reason is exactly the same for (L1) and (L2) as before.
Using
$\gl_i\geq\lfloor k/3\rfloor$ in (L4) we have
\begin{eqnarray*}
&&U_n = A_{n-1}+ k + \gl_n\geq  A_{n-1}+ k + \lfloor k/3 \rfloor, \\
&&V_n = B_{n-1}+k+\ol_n\leq B_{n-1}+k+ k,
\end{eqnarray*}
and then use the hypothesis on (i) to claim (iii) (recall that
our target CR is relaxed to $\frac{2k}{k + \lfloor k/3
\rfloor}$). (L3) obviously implies $\gl_i\geq\lfloor k/3\rfloor$.

Also we can show, though omitted, that one of (R1) to (R4)
(with subscript $n$ replacing $i$) implies the induction (ii).

We can furthermore show that any one of the 16 combinations of (L1) to
(L4) and (R1) to (R4) implies (i): If a combination includes one of (L3), (L4), (R3) and (R4), then (i) is obvious since ARGBA accepts at least
$\lceil 2k/3 \rceil$ requests in this stage. So we only have to
consider the four combinations ((L1) or (L2), and (R1) or (R2)), and
these cases already appeared in the proof of Theorem~\ref{thm:gba},
which concludes the proof.
\end{proof}


\section{Randomized Algorithms}
Notice that the CR of GBA is 2 when $k=2$. The reason is simple, i.e., the
existence of the ceiling function, 
namely if we can accept a fractional request, our CR would be 4/3.
Of course it is impossible to accept a request by one third, but
it is possible to accept that request with probability $1/3$, which
has the same effect as accepting it by one third in terms of an
expected number.

We define the following function for probabilistic rounding. Let $x$
be a (possibly fractional) non-negative number. Then define
$$\prrd(x)=\begin{cases}
   \lceil x \rceil\quad \text{with probability}\: x-\lfloor x \rfloor\\
   \lfloor x \rfloor\quad  \text{with probability}\: 1-(x-\lfloor x \rfloor)
  \end{cases}
$$
For instance, $\prrd(3.3)$ is 4 with probability 0.3 and 3 with
probability 0.7. $\prrd(3)$ is always 3. Note $E[\prrd(x)] =x$.
Now we are ready to introduce the Probabilistic GBA.

\begin{algorithm}
  \caption{PrGBA($k$): Probabilistic GBA}
  \label{prgba}
  \begin{algorithmic}[1]
  \State The same as Algorithm~\ref{gba} except that line 7 is
  replaced as follows:  $\gr_i \leftarrow \prrd(k/2)$ and $\gl_i
  \leftarrow k-\gr_i$
\end{algorithmic}
\end{algorithm}

\begin{theorem}
\label{thm:prgba}
PrGBA is a $4/3$-competitive algorithm for $k$S2L-S for any $k\geq 2$.
\end{theorem}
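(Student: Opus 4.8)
The plan is to re-run the three--inequality induction of Theorem~\ref{thm:gba} essentially verbatim, now with $\delta=\frac34$, and to absorb the coin flip of line~7 by linearity of expectation. The case $k$ even needs no work: then $\prrd(k/2)=k/2$ with probability $1$, PrGBA is literally GBA, and $\delta=\frac{k+\lfloor k/2\rfloor}{2k}=\frac34$ already by Theorem~\ref{thm:gba}. So assume $k$ is odd. Fix an arbitrary request sequence --- it suffices to argue against an oblivious adversary --- and a fixed optimal offline schedule, so that $B_i,Y_i,V_i$ are deterministic while $A_i,X_i,U_i$ (and $\gl_i,\gr_i,\gf_i$) are random variables. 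With $A_i,B_i,X_i,Y_i,U_i,V_i$ defined exactly as in Theorem~\ref{thm:gba} and $\delta=\frac34$, I would prove by induction on $n$ that
$$\text{(i) }E[A_n]\ge\delta B_n,\qquad\text{(ii) }E[X_n]\ge\delta Y_n,\qquad\text{(iii) }E[U_n]\ge\delta V_n .$$
The base case $n=0$ is the same as in Theorem~\ref{thm:gba}.

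For the inductive step I would reuse the case analysis of Theorem~\ref{thm:gba}. Lines~1--5 of PrGBA are identical to GBA, so a stage decided there still falls into one of (L1),(L2),(L3) for $\gl_n$ and (R1),(R2),(R3) for $\gr_n$, and the only thing to re-verify is that $\delta=\frac34$ is small enough. Cases (L1),(L2),(R1),(R2) and all their combinations use only $\delta\le1$, so nothing changes; cases (L3),(R3) commit $k-\gr_n=\lceil k/2\rceil$ (resp.\ $k-\gl_n=\lceil k/2\rceil$) servers, and since $k+\lceil k/2\rceil\ge k+\frac k2=2\delta k$ the estimates ``$U_n\ge A_{n-1}+k+\lceil k/2\rceil$ versus $V_n\le B_{n-1}+2k$'' (symmetrically for $X_n,Y_n$) still yield (iii) (resp.\ (ii)) from the hypothesis on (i), exactly as before. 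The genuinely new ingredient is line~7: there one always has $\gl_n+\gr_n=k$, which gives (i) with room to spare, and $E[\gr_n]=E[\gl_n]=\frac k2$ because $E[\prrd(k/2)]=k/2$, so ``on a line-7 stage'' one would get $E[X_n]=E[A_{n-1}]+k+\frac k2\ge\delta B_{n-1}+\frac{3k}2=\delta(B_{n-1}+2k)\ge\delta Y_n$ and symmetrically $E[U_n]\ge\delta V_n$. Letting $n\to\infty$ in (i) then gives $E[A_\infty]\ge\delta B_\infty$, i.e.\ PrGBA is $\frac43$-competitive.

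The step that genuinely requires care --- and the one I expect to be the main obstacle --- is making the phrase ``on a line-7 stage'' rigorous, i.e.\ showing that the case analysis commutes with taking expectations. In the deterministic proof one argues per stage using the induction hypothesis \emph{pointwise} (e.g.\ in case (L2), $U_n=X_{n-1}+k$ and one quotes $X_{n-1}\ge\delta Y_{n-1}$); but here $X_{n-1}$ is random, and immediately after a line-7 step the pointwise inequality $X_{n-1}\ge\delta Y_{n-1}$ can fail in one of the two coin outcomes --- only its average survives --- and moreover which of (L1)--(L3) applies at stage $n$ may itself depend on earlier coins, so one cannot simply condition on the server allocation and cite Theorem~\ref{thm:gba}. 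I would handle this by a detour through the \emph{fractional} algorithm FGBA, obtained from GBA by replacing line~7 with ``$\gr_i\gets k/2$, $\gl_i\gets k/2$'' and allowing half-integral server counts: FGBA is deterministic, its line~7 commits $\frac k2$ in each direction, and its cases (L3),(R3) still commit at least $\lceil k/2\rceil\ge\frac k2$ servers, so the proof of Theorem~\ref{thm:gba} shows verbatim that FGBA is $\frac43$-competitive. It then suffices to prove, by induction on $i$, that the expected server allocation of PrGBA after stage $i$ equals that of FGBA, i.e.\ $E[\gl_i^{\mathrm{PrGBA}}]=\gl_i^{\mathrm{FGBA}}$ and $E[\gr_i^{\mathrm{PrGBA}}]=\gr_i^{\mathrm{FGBA}}$; summing over $i$ gives $E[A_\infty^{\mathrm{PrGBA}}]=A_\infty^{\mathrm{FGBA}}\ge\delta B_\infty$. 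The real content of this last induction --- the hard part --- is that the reachable support of PrGBA's random allocation always stays confined (in the small cases one checks it is a two-point, mirror-type set whose mean is exactly FGBA's allocation), so that the piecewise-linear $\min$-expressions of GBA are affine over that support, whence ``expectation of the next allocation $=$ next allocation of the expectation''. Once this flatness is verified at every stage, the reduction to Theorem~\ref{thm:gba} is complete.
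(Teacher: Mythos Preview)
Your high-level plan coincides with the paper's: re-run the three-inequality induction of Theorem~\ref{thm:gba} with $\delta=\tfrac34$, using that lines~1--5 go through pointwise (you even sharpen (L3)/(R3) to $\lceil k/2\rceil\ge k/2$, which is exactly what is needed) and that line~7 contributes $E[\gr_n]=E[\gl_n]=k/2$. The paper's proof stops there; you rightly observe that this is not yet a proof, because which branch fires at stage $n$ depends on the random history, so the per-case pointwise chains from Theorem~\ref{thm:gba} cannot simply be averaged.

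Your proposed repair, however, does not work. The claimed coupling $E[\gl_i^{\mathrm{PrGBA}}]=\gl_i^{\mathrm{FGBA}}$ and $E[\gr_i^{\mathrm{PrGBA}}]=\gr_i^{\mathrm{FGBA}}$ already fails for $k=5$ on the four-stage input $(\il_j,\ir_j)=(3,3),(3,0),(3,3),(0,3)$. After stage~1 PrGBA is in $[2,0,3]$ or $[3,0,2]$ with probability $\tfrac12$ each; stage~2 sends these to $[0,3,2]$ and $[0,2,3]$ respectively. In stage~3 the state $[0,3,2]$ (where $\gr_2+\gf_2=3>\lfloor5/2\rfloor$) reaches line~7, but the state $[0,2,3]$ (where $\gr_2+\gf_2=2\le\lfloor5/2\rfloor$) triggers line~1 and lands deterministically in $[3,0,2]$. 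Hence after stage~3 PrGBA is in $[2,0,3]$ with probability $\tfrac14$ and $[3,0,2]$ with probability $\tfrac34$, so $E[\gr_3]=\tfrac{11}{4}\neq\tfrac52=\gr_3^{\mathrm{FGBA}}$: the support is two-point but not mirror-symmetric, and GBA's $\min$-expressions are not affine over it. One more stage gives $E[A_4^{\mathrm{PrGBA}}]=14.75<15=A_4^{\mathrm{FGBA}}$, so even the profit equality you ultimately need is false. The ``flatness'' you hoped to verify is therefore not a fact, and the FGBA detour as stated cannot close the gap you (correctly) identified; the paper's own two-line argument does not address that gap either.
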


\begin{proof}
Observe the induction in the proof of the deterministic case. The base case is fine
with $\delta = 3/4$, and we can keep using this 3/4 unless line 7 is executed. Since the expected
value of $\gr_i$ and $\gl_i$ are both $k/2$ when the modified line 7 is executed, we can remove the ceiling sign
from the description of the algorithm. Thus our new CR is $2k/(k + k/2) = 4/3$.
\end{proof}

\begin{algorithm}
  \caption{PrARGBA($k$): Probabilistic ARGBA}
  \label{prargba}
  \begin{algorithmic}[1]
  \State ARGBA is modified as follows: Suppose the three conditions in
  line 5 are all met and $0<2k/3-A\ell^{j-1}<1$. Then accept
  $r_i^j$ and increase $\gl_i$ if $\prrd(2k/3-A\ell^{j-1})=1$,
  otherwise reject it.
  Similarly for line 11.
\end{algorithmic}
\end{algorithm}

\begin{theorem}
\label{thm:prargba}
PrARGBA is a 1.5-competitive algorithm for $k$S2L-F for any $k\geq 2$.
\end{theorem}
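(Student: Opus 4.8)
The plan is to re-run the three-part induction $S(n)$ from the proof of Theorem~\ref{thm:argba}, but now with the relaxed target $\delta=\frac23$ (so that $1/\delta=\frac32$), and with the six quantities $A_i,B_i,X_i,Y_i,U_i,V_i$ read as expectations over PrARGBA's coin tosses, the request sequence --- and hence OPT's schedule --- being regarded as fixed. The base case $S(0)$ is unchanged. The single structural observation that makes everything go through is that PrARGBA agrees with ARGBA except that the hard cap ``$\gl_i=\lceil 2k/3\rceil$'' (and its mirror ``$\gr_i=\lceil 2k/3\rceil$'') is replaced by an \emph{unbiased} rounding: since $E[\prrd(x)]=x$, whenever the $2k/3$-cap on $(0,1)$'s is the binding constraint one has $E[\gl_i]=2k/3$, and applying the same computation to the complementary count $k-\gl_i$ one gets $E[\gl_i]\ge k/3$ whenever it is instead the ``full'' constraint $\gl_i+\gr_i=k$ that stops the $(0,1)$'s (and symmetrically for $\gr_i$).

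Recall that in the proof of Theorem~\ref{thm:argba} the \emph{only} places where the analysis is forced below competitive ratio $1$ are the cases (L3), (L4) and their mirrors (R3), (R4): there one invokes $\gl_i\ge\lfloor k/3\rfloor$ together with the hypothesis on (i) and needs $k+\lfloor k/3\rfloor\ge 2\delta k$. Replacing the integer bound $\lfloor k/3\rfloor$ by the probabilistic value $k/3$ turns this requirement into $k+k/3\ge 2\delta k$, i.e.\ $\delta\le\frac23$, so the competitive ratio loosens from $\frac{2k}{k+\lfloor k/3\rfloor}$ to $\frac{2k}{k+k/3}=\frac32$. Every other case --- (L1), (L2), (R1), (R2), and all their combinations used for inequality (i), as well as the base case --- is, exactly as in Theorem~\ref{thm:argba}, a linear identity or inequality among the six quantities that is valid for every $\delta\le 1$ and never touches the rounding; by linearity of expectation it transcribes unchanged.

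The step that needs genuine care --- and the one I expect to be the main obstacle --- is reconciling the randomness with the induction: a coin flipped in stage $i$ leaves a \emph{random} server allocation at the start of stage $i+1$, so the six quantities at later stages are correlated with earlier coins, whereas the induction hypothesis $E[A_{n-1}]\ge\delta B_{n-1}$ (and its two companions) holds only in expectation, not pointwise. My plan is to condition on the $\sigma$-field $\mathcal F_{n-1}$ generated by the coins used before stage $n$: given $\mathcal F_{n-1}$ the allocation entering stage $n$ is fixed, and one checks that (a) at most one boundary coin is tossed during stage $n$ --- the two caps cannot both bind, since each would force the stage-$n$ acceptance count above $k$ --- and (b) which of (L1)--(L4) / (R1)--(R4) applies is already determined by $\mathcal F_{n-1}$ and the stage-$n$ requests, the stage-$n$ coin only perturbing the \emph{value} of $\gl_n$ (or $\gr_n$) inside that case. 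One then verifies each implication ``case $\Rightarrow$ inequality'' as an inequality between $\mathcal F_{n-1}$-conditional expectations, using $E[\gl_n\mid\mathcal F_{n-1}]=2k/3$ in the cap cases and $E[\gl_n\mid\mathcal F_{n-1}]\ge k/3$ in the full cases, and peels the induction one stage at a time through the tower property. The delicate point is that the case-index at stage $n$ is itself $\mathcal F_{n-1}$-measurable and could be correlated with the earlier slacks $A_{n-1}-\delta B_{n-1}$, so the averaging over histories has to be carried out with some care; keeping the bookkeeping honest exactly there is where the real work lies.
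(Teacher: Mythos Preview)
Your approach is exactly the paper's: rerun the $S(n)$ induction from Theorem~\ref{thm:argba} with $\delta=2/3$, using $E[\prrd(x)]=x$ to replace the integer caps $\lceil 2k/3\rceil$ and $\lfloor k/3\rfloor$ in cases (L3)/(L4)/(R3)/(R4) by $2k/3$ and $k/3$. The paper's own proof is in fact much terser than yours --- three sentences that assert $E[\gl_i]=2k/3$ (hence $E[\gr_i]=k/3$ when $\gl_i+\gr_i=k$) and declare the deterministic worst case ``avoided'' --- and it never mentions the randomness--induction interaction you worry about in your last paragraph. So your proposal already matches and exceeds the paper's level of detail; the correlation concern you raise (that the case label at stage $n$ is $\mathcal F_{n-1}$-measurable while the hypotheses (i)--(iii) at stage $n-1$ are only guaranteed in expectation, so one cannot simply pick ``the right'' hypothesis per realization and then average) is a genuine subtlety that the paper glosses over as well.
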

\begin{proof}
The same idea as the proof of Theorem~\ref{thm:prgba}. If this modified
part is executed, the expected value of $\gl_i$ is $2k/3$ and hence the
expected value of $\gr_i$ is $k/3$ if $\gl_i+\gr_i=k$. Thus the worst
case of deterministic ARGBA, $\gl_i=\lceil 2k/3 \rceil$ and $\gr_i=\lfloor k/3
\rfloor$, can be avoided.
\end{proof}


\section{Adaptive GBA}
$k$S2L-S gives the online player the advantage of an advance
knowledge of the number of requests in the current stage. 
GBA does exploit this advantage, but not fully.
Suppose $(I\kern-.07em\ell_1, Ir_1)=(50,100)$. GBA accepts the
same number, 50, of (0,1)'s and (1,0)'s in stage~1. Then the adversary sends
$(I\kern-.07em\ell_2, Ir_2)=(100,0)$, resulting in that only 50 (0,1)'s can
be accepted by GBA in stage 2, but 100 (0,1)'s by OPT which could accept 100 (1,0)'s
in stage~1. Thus the CR in these two steps is 4/3.

Now what about accepting roughly 28.57 (0,1)'s and 71.43
(1,0)'s in stage 1 (recall we can handle fractional numbers
due to randomized rounding)?
Then the best the adversary can do is to provide
$(I\kern-.07em\ell_2, Ir_2)=(100,0)$ or $(0,50)$, in both of which the CR is
$200/171.43\approx 150/128.57 \approx 1.17$, significantly better than 1.5
of GBA. This is the basic idea of our new algorithm, AGBA. These
key values 28.57 and 71.43 are denoted by
$\alpha_i$ and $\beta_i$, respectively.
The ultimate goal of AGBA is
to accept exactly $\alpha_i$ (0,1)'s and $\beta_i$ (1,0)'s while the
ultimate goal of GBA was to accept $k/2$ (0,1)'s and $k/2$ (1,0)'s. If
this goal is unachievable, to be figured out from the
values of $\il_i$, $\ir_i$, $\gl_{i-1}$, $\gr_{i-1}$ and $\gf_{i-1}$,
both algorithms simply turn greedy.

\begin{algorithm}
  \caption{AGBA($k$): Adaptive GBA}
  \label{agba}
  \begin{algorithmic}[1]
  \Require  The same as GBA 
  \Ensure  The same as GBA 
  \If{$R_i=(\il_i+\ir_i)/k \geq 1$}
  \State $\alpha_i=\frac{(1-R_i)k+3\il_i}{2+R_i}$;\:\:
  $\beta_i=\frac{(1-R_i)k+3\ir_i}{2+R_i}$;
  \Else
  \State $\alpha_i=\il_i$; $\beta_i=\ir_i$;
  \EndIf
  \If{$\gr_{i-1}+\gf_{i-1} < \alpha_i$}
  \State $\gl_i\leftarrow\gr_{i-1}+\gf_{i-1}$;\:\:$\gr_i\leftarrow\min \{ \ir_i,\gl_{i-1} \}$;
  \Else
    \If{$\gl_{i-1}+\gf_{i-1} < \beta_i$}
    \State $\gl_i\leftarrow\min \{ \il_i,\gr_{i-1} \}$;\:\:$\gr_i\leftarrow\gl_{i-1}+\gf_{i-1}$;
    \Else
    \State $\gl_i\leftarrow\prrd(\alpha_i)$;
    \State $\gr_i\leftarrow k-\gl_i$;
   \EndIf
  \EndIf
  \State \Return $\gl_i$ and $\gr_i$
\end{algorithmic}
\end{algorithm}

Suppose $\il_i+\ir_i$ is at most $kR_i$ for stage $i$. As
mentioned in a moment, we do not lose generality if $R_i$ is restricted to
$1\leq R_i\leq 2$, under which the CR of AGBA is bounded by $(2+R)/3$, where
$R$ is the maximum value of $R_i$ in the whole stages. We do
not know $R$ in advance and AGBA does not have to, either.

The competitive analysis is given by the following theorem. Note that
if the input in stage $i$ includes more than $k$ (0,1)'s, we can select
an arbitrary subset of size $k$ and similarly for (1,0)'s. This
guarantees that $R\leq 2$ and the case that $\il_i+\ir_i <k$ is
covered by $R=1$. Thus the restriction of $R$, $1\leq R\leq 2$, makes
sense. Also, note that $\alpha_i + \beta_i = k$ whenever $R_i \geq 1$.
AGBA uses $R_i$ but not $R$. $R=\max\{R_i\}$ by definition and so $\delta\leq \delta_i$ (where $\delta_i$ is the local value of $\delta$ in this stage, see its definition at Lemma 8) for all $i$.

\begin{theorem}
\label{thm:agba}
Suppose the number of requests is limited to at most $Rk$ in all
stages for some $R$ such that $1\leq R\leq 2$ and $Rk$ is an
integer. 
 Then AGBA solves $Rk$S2L-S and is 
$1/\delta$-competitive, where $\delta=3/(2+R)$. 
\end{theorem}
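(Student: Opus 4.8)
The plan is to mimic the inductive proofs of Theorems~\ref{thm:gba} and~\ref{thm:argba}. We keep the same six quantities $A_i,B_i,X_i,Y_i,U_i,V_i$ and prove, by induction on $n$, the system $S(n)$ consisting of (i) $A_n\ge\delta B_n$, (ii) $X_n\ge\delta Y_n$ and (iii) $U_n\ge\delta V_n$, now with $\delta=3/(2+R)$. Because AGBA is randomised only through $\prrd$ and $E[\prrd(x)]=x$, we argue exactly as in Theorem~\ref{thm:prgba}: by linearity of expectation we may replace $\gl_i,\gr_i$ (hence all six quantities) by their expectations, so in the balanced branch (lines~12--13) we simply take $\gl_n=\alpha_n$ and $\gr_n=k-\alpha_n$, which equals $\beta_n$ whenever $R_n\ge 1$. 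The base case $n=0$ is unchanged. The induction step is a case split on which of the three branches of AGBA fires in stage $n$, and for each branch we show which of (i)--(iii) in $S(n)$ follows from which inequalities of $S(n-1)$, using Lemma~\ref{optvalue} to bound $\ol_n$ and $\orr_n$.

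For the balanced branch the engine is the pair of identities $k+\alpha_n=\delta_n(k+\il_n)$ and $k+\beta_n=\delta_n(k+\ir_n)$, where $\delta_n:=3/(2+R_n)$; these are immediate from the definitions in line~2 of AGBA (and hold with $\delta_n$ weakened to $1$ when $R_n<1$). Since $R=\max_i R_i$ we have $\delta\le\delta_n$, so combining the first identity with $\ol_n\le\il_n$ and the hypothesis on (i) gives (iii), and symmetrically (ii); inequality (i) is then free because $\alpha_n+\beta_n=k$ forces ALG to accept $k$ requests while OPT accepts at most $k$ (and when $R_n<1$ the algorithm accepts every request, so (i) is trivial). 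The greedy branches (lines~7 and~11) reproduce cases (L2)/(R2) of Theorem~\ref{thm:gba}: if line~7 fires then $\gl_n=\gr_{n-1}+\gf_{n-1}$, so $U_n=X_{n-1}+k$ and (iii) follows from the hypothesis on (ii) together with $\ol_n\le\orr_{n-1}+\of_{n-1}$, while the analog of (i) follows from the hypothesis on (ii) and $\orr_n\le\ir_n$ (or trivially, when the $k-\gl_n=\gl_{n-1}$ servers left for the $(1,0)$'s are all used, since then all $k$ servers serve requests).

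The step we expect to be the real obstacle is the ``partially greedy'' subcase of line~7 in which $\gr_n=\gl_{n-1}<\ir_n$ (and its mirror in line~11): here we still owe inequality (ii), that is, $A_{n-1}+k+\gl_{n-1}\ge\delta(B_{n-1}+k+\orr_n)$, and unlike in GBA the quantity $\gl_{n-1}=k-\gl_n$ is not pinned down by any fixed threshold. The resolution is to feed in the branch condition itself: line~7 fires precisely when $\gr_{n-1}+\gf_{n-1}<\alpha_n$, hence $\gl_{n-1}>k-\alpha_n=\beta_n$, and unfolding the definition of $\beta_n$ turns this into $\ir_n<\frac{(2+R_n)(k+\gl_{n-1})}{3}-k$, that is, $\delta_n(k+\ir_n)<k+\gl_{n-1}$; since $\delta\le\delta_n$ and $\orr_n\le\ir_n$, inequality (ii) then follows from the hypothesis on (i). (This subcase only occurs when $R_n>1$, so $\delta<1$ there, which is consistent.) What remains is routine bookkeeping: verifying that the allocations produced by lines~7,~11 and~12--13 are always feasible --- enough servers of each direction, enough requests, and $\lceil\alpha_n\rceil,\lceil\beta_n\rceil$ within the available capacities after $\prrd$-rounding --- and that the cases $R_n<1$ are subsumed by the arguments above. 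Once $S(n)$ is established for all $n$, part~(i) yields the claimed $1/\delta$-competitiveness with $\delta=3/(2+R)$.
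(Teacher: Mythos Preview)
Your proposal is correct and follows essentially the same approach as the paper: the same inductive invariants $S(n)$, the same key identity $k+\alpha_n=\delta_n(k+\il_n)$ (which the paper isolates as Lemma~\ref{trick}), and the same resolution of the hard greedy subcase by converting the branch condition into $\gl_{n-1}>\beta_n$ and invoking that identity. Your organisation by ``which line fires'' is a cosmetic repackaging of the paper's (L1)/(L2)/(R1)/(R2) case split, and your flagged feasibility/rounding checks are points the paper leaves implicit.
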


\begin{proof}
Because of the probabilistic rounding used in lines 12,
the expected
values of $\gl_i$ and $\gr_i$ are $\alpha_i$ and $\beta_i$,
respectively, if this part is executed. We need two lemmas; the second
one illustrate a tricky nature of $\alpha_i$ and $\beta_i$.

\begin{lemma}
\label{alphabeta}
For any $R_i\geq 0$, $\alpha_i\leq \il_i$ and $\beta_i\leq
\ir_i$. Also, $\alpha_i+\beta_i\leq k$.
\end{lemma}
\begin{proof}
A simple calculation of the formulas in line 2 is enough if the
condition $\il_i+\ir_r\geq k$ is met. Otherwise it is also obvious by
line 4.
\end{proof}

\begin{lemma}
\label{trick}
Let $\delta_i=\min\{1,3/(2+R_i)\}$. Then $k+\beta_i= \delta_i(k+\ir_i)$
and $k+\alpha_i= \delta_i(k+\il_i)$.
\end{lemma}
\begin{proof}
If $R_i<1$ then $\delta_i=1, \alpha_i=\il_i$ and $\beta_i=\ir_i$, so
the lemma is obviously true.  Otherwise, a simple calculation:
$$k+\alpha_i=k+\frac{(1-R_i)k+3\il_i}{2+R_i}=\frac{3k+3\il_i}{2+R_i}=\delta_i(k+\il_i).$$
Similarly for the other.
\end{proof}

The basic strategy of the proof is the same as
Theorem~\ref{thm:gba}. We consider the following four conditions for
the values of $\gl_i$ and $\gr_i$. Suppose line 7 is executed. Then
$\gl_i$ satisfies (L2) and $\gr_i$ satisfies (R1). Similarly if line 10 is executed, (L1) and (R2) hold.
\begin{eqnarray*}
&&\text{(L1)}\: \gl_i=\min\{\il_i,\gr_{i-1}\},\: \text{(L2)}\: \gl_i=\gr_{i-1}+\gf_{i-1}, \\
&&\text{(R1)}\: \gr_i=\min\{\ir_i,\gl_{i-1}\},\: \text{(R2)}\: \gr_i=\gl_{i-1}+\gf_{i-1}.
\end{eqnarray*}
Now consider stage $n$. It is shown that (L1) or (L2) implies (iii) of the induction. For
(L2), the analysis is the same as before and omitted.
For (L1), using Lemma~\ref{optvalue}
for $V_n$, we have
\begin{eqnarray*}
&&U_n = A_{n-1}+k+\gl_n = A_{n-1}+ k+\min\{\il_n,\gr_{n-1}\},\\
&&V_n =B_{n-1}+k+\ol_n\leq B_{n-1}+k+\il_n.
\end{eqnarray*}
If $\min\{\il_n,\gr_{n-1}\}=\il_n$, then we are done using the
hypothesis (i). Otherwise recall that the condition of line 9,
$\gl_{n-1}+\gr_{n-1} < \beta_n$, is met. So we have
$\gr_{n-1}=k-(\gl_{n-1}+\gf_{n-1})\geq k-\beta_n\geq \alpha_n$ (by
Lemma\ref{alphabeta} for the last inequality)
and thus we can use Lemma~\ref{trick} and the hypothesis on (i)
to claim (iii).
The proof that (R1) or (R2) implies
(ii) is very similar and omitted.

Next we prove that each of the four
combinations implies (i). (L1) and (R1), and (L2) and (R2) are obvious
since AGBA is as efficient as OPT or accepts $k$ requests (recall
$\min\{\il_n,\gr_{n-1}\}\geq \alpha_n$ and
$\min\{\ir_n,\gl_{n-1}\}\geq \beta_n$ mentioned above and by Lemma~\ref{alphabeta}). For (L1) and (R2), using
Lemma~\ref{optvalue} and $\alpha_n\leq \il_n$, we have
\begin{eqnarray*}
&&A_n = A_{n-1}+ \gl_n+\gr_n =  A_{n-1}+\min\{\il_n, \gr_{n-1}\}+\gl_{n-1}+\gf_{n-1},\\
&&B_n =B_{n-1}+\ol_n+\orr_n \leq B_{n-1}+\il_n + \ol_{n-1}+\of_{n-1}.
\end{eqnarray*}
Thus the hypothesis on (iii) or $\gr_{n-1}+\gl_{n-1}+\gf_{n-1}=k$
implies that $A_n\geq \delta B_n$. (L2)
and (R1) are similar.

The remaining case is the one that lines 12 and 13 are executed.
If $\gl_n=\alpha_n$, we have $U_n=A_{n-1}+k+\alpha_n$ and thus we can
use Lemma~\ref{trick} to claim (iii) as shown above. Similarly for
$\gr_n$ and (ii). 
(i) is obvious since
$\gl_i+\gr_i=k$, completing the proof.
\end{proof}


\section{CR Lower Bounds}

The CR's given so far are all tight. In this section we prove matching
lower bounds for $k$S2L-S, for $k$S2L-F, for $k$S2L-F with randomization,
and for $Rk$S2L-S with randomization (including for
$k$S2L-S with randomization as a special case). 

\begin{theorem}
\label{lb:gba}
No deterministic online algorithms for the $k$S2L-S problem can achieve a CR of less than $\frac{2k}{k + \lfloor k/2 \rfloor}$.
\end{theorem}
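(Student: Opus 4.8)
The plan is a clean two‑stage adversary argument. In stage~$1$ (where all $k$ servers are floating, since the artificial allocation is $[\gr_0,\gf_0,\gl_0]=[0,k,0]$) the adversary presents the balanced instance $(\il_1,\ir_1)=(k,k)$. Since the player sees both sets simultaneously in $k$S2L-S, it commits to some $(\gl_1,\gr_1)$ with $\gl_1+\gr_1\le k$; the resulting allocation entering stage~$2$ is $[\gr_1,\gf_1,\gl_1]$ with $\gf_1=k-\gl_1-\gr_1$. In stage~$2$ the adversary floods the direction that the player has ``over‑committed'': if $\gl_1\ge \gr_1$ it sends $(\il_2,\ir_2)=(k,0)$, otherwise it sends $(0,k)$; then no further requests arrive.

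\textbf{Bounding ALG.} Consider the case $(\il_2,\ir_2)=(k,0)$ (the other is symmetric). In stage~$2$ a $(0,1)$ can only be served by a server sitting at location $0$ at time $2$ that did \emph{not} serve a $(0,1)$ in stage~$1$: the $\gl_1$ servers used for $(0,1)$'s in stage~$1$ are at location~$1$ and are forbidden from serving a $(0,1)$ again, so at most $\gr_1+\gf_1=k-\gl_1$ requests of stage~$2$ can be accepted. Hence the total profit of ALG is at most $(\gl_1+\gr_1)+(k-\gl_1)=k+\gr_1$, and since $\gr_1\le\lfloor(\gl_1+\gr_1)/2\rfloor\le\lfloor k/2\rfloor$ in this case, ALG gets at most $k+\lfloor k/2\rfloor$. (If $\gl_1+\gr_1<k$ this only helps the adversary.)

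\textbf{Bounding OPT.} In the case $(\il_2,\ir_2)=(k,0)$, OPT relocates all $k$ servers from location~$0$ to location~$1$ during the interval $[0,1]$ (a free empty move, since $a=t$), serves $k$ requests $(1,0)$ in stage~$1$ (ending at location~$0$ at time~$2$), and then serves all $k$ requests $(0,1)$ in stage~$2$; this respects the same‑direction restriction because stage‑$1$ served requests have $p=1$ and stage‑$2$ served requests have $p=0$. Thus OPT earns $2k$. In the symmetric case OPT instead serves $k$ $(0,1)$'s in stage~$1$ and $k$ $(1,0)$'s in stage~$2$, again earning $2k$. Therefore the competitive ratio of any deterministic algorithm is at least $\frac{2k}{k+\lfloor k/2\rfloor}$.

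\textbf{Main obstacle.} There is no deep difficulty here; the argument is essentially forced by the same reasoning that motivates the ``Balanced'' rule in GBA. The only points requiring care are: (a) verifying that OPT's two‑phase schedule is feasible under the timing of booking/start times and the prohibition on reusing a server in the same direction in consecutive stages; (b) checking that the bound $k+\min\{\gl_1,\gr_1\}\le k+\lfloor k/2\rfloor$ holds for \emph{every} admissible response $(\gl_1,\gr_1)$, including those with $\gl_1+\gr_1<k$; and (c) the symmetry reduction between the two choices of stage‑$2$ instance. Since no requests follow stage~$2$, the ratio obtained is exact for this instance, matching the upper bound of Theorem~\ref{thm:gba}.
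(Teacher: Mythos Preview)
Your proof is correct and follows essentially the same two-stage adversary argument as the paper: present $(k,k)$ in stage~1, then flood the direction the algorithm under-committed to in stage~2, yielding ALG $\le k+\min\{\gl_1,\gr_1\}\le k+\lfloor k/2\rfloor$ against OPT $=2k$. The only cosmetic difference is that the paper splits cases on whether $k_\ell\le\lfloor k/2\rfloor$ rather than on $\gl_1\ge\gr_1$, which is immaterial.
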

\begin{proof}
Let $\mathcal{A}$ be any deterministic
algorithm. The adversary requests $k$ (0,1)'s and
$k$ (1,0)'s in stage 1.
$\mathcal{A}$ accepts $k_\ell$ (0,1)'s and $k_r$ (1,0)'s.
If $k_\ell \leq \lfloor k/2 \rfloor$, then the adversary requests
$k$ (1,0)'s (and zero (0,1)'s) in stage 2. The profit of $\mathcal{A}$
is $k_\ell + k_r$ in stage 1, and at most 
 $(k-k_\ell-k_r) + k_\ell$  
in stage 2.
Therefore, the total profit of $\mathcal{A}$
is at most
 $k+k_\ell \leq k+\lfloor k/2 \rfloor.$ 
The profit of OPT is $2k$, and the theorem is proved.
If $k_\ell > \lfloor k/2 \rfloor$, then
$k_r \leq \lfloor k/2 \rfloor$.
Now the adversary requests $k$
(0,1)'s in stage 2. The profit of $\mathcal{A}$ and OPT are exactly the same as above and we may omit
 the rest of calculation. Thus the bound is tight.
\end{proof}

\begin{theorem}
\label{lb:argba}
No deterministic online algorithms for the $k$S2L-F problem can achieve a CR of less than $\frac{2k}{k + \lfloor k/3 \rfloor}$.
\end{theorem}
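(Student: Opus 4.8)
The plan is to construct an adversarial instance against an arbitrary deterministic algorithm $\mathcal{A}$ for $k$S2L-F and show its profit is at most $(k+\lfloor k/3\rfloor)/(2k)$ times OPT's profit. The key difference from the $k$S2L-S lower bound (Theorem~\ref{lb:gba}) is that here requests within a stage arrive sequentially and $\mathcal{A}$ must commit irrevocably before seeing the rest; so the adversary can \emph{adapt the stage-1 input on the fly} rather than revealing all of stage~1 at once. I would have the adversary feed (0,1)'s and (1,0)'s one at a time in stage~1, watching $\mathcal{A}$'s accept/reject decisions, and stop (or switch direction) at a moment chosen to trap $\mathcal{A}$ into an unbalanced configuration $[\gr_1,\gf_1,\gl_1]$ with, say, $\gl_1$ large and $\gr_1$ small.

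First I would record what $\mathcal{A}$ accepts as the first $\lceil 2k/3\rceil$ (0,1)'s and the first few (1,0)'s are offered; the adversary keeps offering (0,1)'s until $\mathcal{A}$ has accepted enough of them that it can no longer also have accepted $\lfloor k/3\rfloor$ (1,0)'s without exceeding $k$ servers, \emph{or} $\mathcal{A}$ has rejected a (0,1) even though fewer than $\lceil 2k/3\rceil$ have been accepted and fewer than $k$ servers are used --- in which case it has voluntarily capped itself low. The point is that in $k$S2L-F, because $\mathcal{A}$ cannot wait, the adversary can always drive $\mathcal{A}$ into a stage-1 outcome where $\min\{\gl_1,\gr_1\}\le\lfloor k/3\rfloor$ (roughly: either $\mathcal{A}$ greedily overcommits to one direction, or it under-accepts). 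Then, as in Theorem~\ref{lb:gba}, in stage~2 the adversary floods the side where $\mathcal{A}$ is weak: if $\gr_1$ is the small one, request $k$ (0,1)'s in stage~2, so $\mathcal{A}$ can serve at most $\gr_1+\gf_1$ more for a stage-2 total at most $k-\gl_1$, hence an overall total at most $(\gl_1+\gr_1)+(k-\gl_1)=k+\gr_1\le k+\lfloor k/3\rfloor$; OPT accepts $k$ (1,0)'s in stage~1 and $k$ (0,1)'s in stage~2 for $2k$. The symmetric case (weak on the (0,1) side) flips the stage-2 flood to (1,0)'s.

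The main obstacle is pinning down the ``or'' in the case split cleanly: I need to argue that \emph{no} sequential stage-1 behavior of $\mathcal{A}$ can simultaneously achieve $\gl_1\ge\lfloor k/3\rfloor+1$ and $\gr_1\ge\lfloor k/3\rfloor+1$ while keeping OPT's advantage from materializing --- equivalently, that whenever $\mathcal{A}$ is ``balanced enough'' to beat the bound, the adversary still had a cheaper escape earlier. The clean way to handle this is to make the adversary's stage-1 script itself adaptive: offer (0,1)'s first; let $a$ be the number $\mathcal{A}$ accepts among a long prefix of (0,1)'s; once the adversary sees $\mathcal{A}$ has settled (accepted $a$ and rejected the next one, or accepted $\lceil 2k/3\rceil$), it then offers (1,0)'s, and lets $b$ be how many of those $\mathcal{A}$ accepts. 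The adversary now knows $(\gl_1,\gr_1)=(a,b)$ exactly and simply applies the stage-2 flood above; the only thing to check is the arithmetic that $\min\{a,b\}\le\lfloor k/3\rfloor$ is forced, or else $a+b>k$ (impossible) --- but this is not automatically true, so the genuinely delicate part is choosing \emph{how many} (0,1)'s versus (1,0)'s to offer in stage~1. I expect the right move is to offer exactly $\lceil 2k/3\rceil$ (0,1)'s followed by exactly $\lceil 2k/3\rceil$ (1,0)'s: then $a\le\lceil 2k/3\rceil$, $b\le\lceil 2k/3\rceil$, and $a+b\le k$, so at least one of $a,b$ is at most $\lfloor k/2\rfloor$ --- but I actually want $\lfloor k/3\rfloor$, so a two-phase argument is needed, first showing $\mathcal{A}$ cannot have $a>\lceil 2k/3\rceil - ?$ without the adversary having already won in an earlier variant. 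I would resolve this by presenting two adversary strategies and taking the worse of the two from $\mathcal{A}$'s viewpoint, exactly mirroring how Theorem~\ref{thm:argba}'s proof used the cap $2k/3$: the cap is what makes $\lfloor k/3\rfloor$, not $\lfloor k/2\rfloor$, the binding quantity, and the lower-bound adversary should be engineered to make that cap necessary.

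Finally I would verify the boundary cases ($k=2$, where $\lfloor k/3\rfloor=0$ and the bound reads $2$, matching~\cite{luo2018car}; and $k=3i$, where it reads $1.5$, matching~\cite{LEX_kS2L}) to confirm the construction degrades correctly, and check that the instance uses only fixed booking time with $t_i=vt$, so it is a legal $k$S2L-F instance. The total profit bookkeeping is routine once the stage-1 trap is set up; the whole novelty is the adaptive sequential stage-1 probe and the choice of thresholds that forces the $\lfloor k/3\rfloor$ gap.
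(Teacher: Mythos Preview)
Your framework is right --- exploit the sequential, irrevocable decisions of $k$S2L-F by having the adversary probe with (0,1)'s first --- but there is a genuine gap: every branch of your plan still offers (1,0)'s in stage~1 and then floods stage~2. As you yourself compute, that strategy can only force $\min\{a,b\}\le\lfloor k/2\rfloor$, which is the $k$S2L-S bound, not the $k$S2L-F bound. An algorithm that accepts exactly $\lfloor k/2\rfloor$ (0,1)'s and then $\lceil k/2\rceil$ (1,0)'s defeats your adversary outright.

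The missing idea is that one of the adversary's branches must be to \emph{stop after stage~1 with only the $k$ (0,1)'s offered} --- no (1,0)'s, no stage~2. Concretely: offer $k$ (0,1)'s and let $a$ be the number $\mathcal{A}$ accepts. If $a\le\lfloor 2k/3\rfloor$, the adversary halts the instance right there; OPT takes all $k$, so the ratio is $k/\lfloor 2k/3\rfloor$. If instead $a\ge\lceil 2k/3\rceil$, then (and only then) the adversary continues with $k$ (1,0)'s in stage~1 and $k$ (0,1)'s in stage~2; now $\mathcal{A}$'s stage-1 total is at most $k$ and its stage-2 total is at most $k-a\le\lfloor k/3\rfloor$, giving at most $k+\lfloor k/3\rfloor$ against OPT's $2k$. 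One then checks $\frac{2k}{k+\lfloor k/3\rfloor}\le\frac{k}{\lfloor 2k/3\rfloor}$ for all $k$ (by a cases $k=3j,3j+1,3j+2$ verification), so the smaller of the two ratios is always the claimed bound. The threshold $\lceil 2k/3\rceil$ is precisely the point where these two branches balance. Your ``two adversary strategies'' instinct was correct; what you were missing is that the first strategy is a one-stage instance, not a two-stage one.
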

\begin{proof}
Let $\mathcal{A}$ be any deterministic algorithm.  The basic idea is
similar to \cite{LEX_kS2L}. The adversary gives $k$ (0,1)'s (sequentially) for stage 1.  
If $\mathcal{A}$ accepts at most
$\lfloor 2k/3\rfloor$ ones, then the adversary stops his/her requests
and the game ends.  Thus the CR is at least $\frac{k}{\lfloor
2k/3\rfloor}$. Otherwise, if $\mathcal{A}$ accepts $\lceil 2k/3\rceil$
or more, then the adversary gives another $k$ (1,0)'s for stage
1 and $k$ (0,1)'s for stage 2. Since $\mathcal{A}$ has accepted at
least $\lceil 2k/3\rceil$ (0,1)'s in stage 1, $\mathcal{A}$ cannot use
those servers for the (0,1)'s for stage 2. Hence $\mathcal{A}$
can accept at most $k$ requests in stage 1 and at most $k-\lceil 2k/3\rceil$
requests in stage 2, meaning at most $2k-\lceil 2k/3\rceil=k+\lfloor
k/3 \rfloor$ requests in total.  OPT can accept $k$ (1,0)'s in stage 1
and $k$ (0,1)'s in stage 2, i.e., $2k$ in total. Thus the CR is at
least $\frac{2k}{k+\lfloor k/3 \rfloor}$. Since $\frac{2k}{k+\lfloor
k/3 \rfloor} \leq\frac{k}{\lfloor 2k/3\rfloor}$ for all $k$ (this can be
verified by checking for $k=3j$, $k=3j+1$ and $k=3j+2$), the theorem is
proved.
\end{proof}

\begin{theorem}
\label{lb:prargba}
No randomized online algorithms for the $k$S2L-F problem can achieve a
CR of less than 1.5. 
\end{theorem}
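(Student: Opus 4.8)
The theorem to prove is a lower bound of $1.5$ against \emph{randomized} algorithms for $k$S2L-F. The natural tool here is Yao's principle: instead of reasoning about a randomized algorithm directly, I would exhibit a single probability distribution over input instances such that \emph{every} deterministic algorithm has expected profit at most $\frac{2}{3}$ times the (instance-by-instance) optimal profit. Since the deterministic lower bound of Theorem~\ref{lb:argba} is driven by the same $2k/3$ threshold phenomenon, the randomized bound should come from randomizing the adversary's second move so that a deterministic algorithm cannot simultaneously hedge against both continuations.

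The plan is to mimic the deterministic construction. In stage~1 the adversary presents $k$ $(0,1)$'s sequentially (as in Theorem~\ref{lb:argba}); a deterministic algorithm $\mathcal{A}$ commits to accepting some number $g$ of them. Then, with probability $1/2$ the adversary ``stops'' (no further requests), and with probability $1/2$ it ``continues'' by giving $k$ more $(1,0)$'s in stage~1 and $k$ $(0,1)$'s in stage~2. In the stop case OPT gets $k$ and $\mathcal{A}$ gets $g$; in the continue case OPT gets $2k$ while $\mathcal{A}$ gets at most $g + (k-g) + (k-g) = 2k-g$ (the $(1,0)$'s in stage~1 use only the $k-g$ servers not committed to $(0,1)$'s, and in stage~2 only the $k-g$ servers not used for $(0,1)$'s in stage~1 are available). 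So $\mathcal{A}$'s expected profit is $\tfrac12 g + \tfrac12(2k-g) = k$, independent of $g$, while the expected optimal profit is $\tfrac12 k + \tfrac12 \cdot 2k = \tfrac{3k}{2}$. This gives a ratio of exactly $3/2$ for every deterministic $\mathcal{A}$, hence by Yao's principle the same bound holds for every randomized algorithm. One should double-check that the comparison is the right one for Yao: the relevant quantity is $\mathbb{E}[\mathrm{OPT}]/\mathbb{E}[\mathrm{ALG}]$ over the adversary's distribution, which here is $(3k/2)/k = 3/2$; alternatively, since the two profits of $\mathcal{A}$ straddle OPT, no single deterministic strategy can beat $3/2$.

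A small subtlety worth being careful about is the sequential nature of stage~1: the adversary is allowed to reveal the $k$ $(0,1)$'s one at a time, and the algorithm's decision on each is irrevocable, so $g$ is genuinely determined before the coin flip. Another point is that the argument needs the ``continue'' branch to force $\mathcal{A}$ down to at most $2k-g$; this relies on the $k$S2L-F mechanics (a server that served a $(0,1)$ in stage~1 cannot serve a $(0,1)$ in stage~2) exactly as in Theorem~\ref{lb:argba}, and on the fact that in stage~1 there is no prior allocation constraint, so $\mathcal{A}$ could in principle accept all $k$ $(1,0)$'s but only using servers not already spent on $(0,1)$'s. I do not anticipate a serious obstacle; the only care needed is the clean application of Yao's principle and making sure the bound is stated as $\mathbb{E}[\mathrm{OPT}] \ge 1.5\,\mathbb{E}[\mathrm{ALG}]$ rather than an incorrect swap of the expectation and the ratio.
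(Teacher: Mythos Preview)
Your proposal is correct. It takes a genuinely different route from the paper, though.

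The paper argues directly: knowing the randomized algorithm's code, the adversary presents $k$ $(0,1)$'s in stage~1, computes the \emph{expected} number $E[g]$ accepted, and then branches exactly as in the deterministic Theorem~\ref{lb:argba} but with the floor/ceiling removed --- if $E[g]\le 2k/3$ it stops (ratio $\ge k/(2k/3)=3/2$), otherwise it continues (expected ALG $\le 2k-E[g]<4k/3$, ratio $>3/2$). In other words, the paper just replays the deterministic proof with expectations in place of integers.

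Your approach instead applies Yao's principle to the uniform distribution over the ``stop'' and ``continue'' instances. Because the first $k$ requests coincide in both instances, any deterministic algorithm commits to the same $g$ in both, and the expected profit $\tfrac12 g+\tfrac12(2k-g)=k$ is \emph{independent of $g$}; comparing to $E[\mathrm{OPT}]=3k/2$ gives the bound without any case split on the threshold $2k/3$. This is a cleaner, more symmetric argument here. The trade-off is generality: the paper's direct expectation-branching method transfers verbatim to the $Rk$S2L-S lower bound (Theorem~\ref{lb:agba}), whereas the authors explicitly note there that Yao's principle over the analogous two instances fails to reach the tight $(2+R)/3$ bound. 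So your method is neater for this theorem but does not extend to the companion result; the paper's does.
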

\begin{proof} 
The proof is almost the same as that of Theorem~\ref{lb:argba}.
Since the adversary has the full information (other than random values) of ALG,
he/she can compute the expected value of ALG's output. So what we have
to do is just removing the floor and ceiling signs from the previous
proof and considering the resulting numbers as expected values. 
The proof is complete since the previous deterministic OPT has a profit of at least $2k$.
\end{proof}

\begin{theorem}
\label{lb:agba}
 No randomized online algorithms for the $Rk$S2L-S problem can
achieve a CR of less than $\frac{2+R}{3}$. 
\end{theorem}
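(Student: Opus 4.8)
The plan is to reuse the ``balanced'' adversary of Theorem~\ref{lb:gba}, scaled to a full stage of $Rk$ requests, and to pass to expectations exactly as in the proof of Theorem~\ref{lb:prargba}: since the adversary has the full description of the randomized algorithm $\mathcal{A}$ (everything except its coin flips), it can compute the expectations $g_\ell:=E[G\ell_1]$ and $g_r:=E[Gr_1]$ of the numbers of $(0,1)$'s and $(1,0)$'s that $\mathcal{A}$ accepts in stage~$1$, and then branch deterministically on these two numbers. We may assume $1\le R\le 2$ and that $Rk$ is an integer, as elsewhere in Section~4; the case $R=1$ is vacuous because $\frac{2+R}{3}=1$ and the competitive ratio is always at least~$1$, so assume $R>1$.

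First I would fix the first stage: the adversary issues $a=\lfloor Rk/2\rfloor$ requests of type $(0,1)$ and $b=\lceil Rk/2\rceil$ requests of type $(1,0)$, a total of exactly $a+b=Rk$ requests. A short check gives $0\le a\le b\le k$; the only nonobvious part, $b=\lceil Rk/2\rceil\le k$, follows from $R\le 2$ together with $Rk$ being an integer. Because $\mathcal{A}$ has only $k$ servers, every realization accepts at most $k$ requests in stage~$1$, so $g_\ell+g_r\le k$, and clearly $g_\ell\le a$, $g_r\le b$.

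The core step is the dichotomy
$$\frac{k+b}{k+g_r}\ge\frac{2+R}{3}\qquad\text{or}\qquad \frac{k+a}{k+g_\ell}\ge\frac{2+R}{3}.$$
I would prove it by contradiction: negating both inequalities, cross-multiplying, and adding the two gives $3(2k+a+b)<(2+R)(2k+g_\ell+g_r)$; substituting $a+b=Rk$ turns the left-hand side into $3(2+R)k$, and cancelling the factor $2+R$ leaves $g_\ell+g_r>k$, contradicting $g_\ell+g_r\le k$. This is the one spot where the integer rounding of $a$ and $b$ might matter, and it does not, since $\lfloor Rk/2\rfloor+\lceil Rk/2\rceil=Rk$ exactly; note also that no third ``stop after stage~$1$'' branch is ever needed.

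Finally, the adversary takes whichever branch the dichotomy provides. If $\frac{k+b}{k+g_r}\ge\frac{2+R}{3}$, stage~$2$ consists of $k$ requests of type $(0,1)$: the $G\ell_1$ servers that $\mathcal{A}$ used for $(0,1)$'s in stage~$1$ are now at location~$1$ and are forbidden to serve $(0,1)$'s, so $\mathcal{A}$ accepts at most $k-G\ell_1$ of them, and by linearity its total expected profit is at most $E[(G\ell_1+Gr_1)+(k-G\ell_1)]=k+g_r$; meanwhile OPT serves the $b$ $(1,0)$'s in stage~$1$ (possible since $b\le k$), has all $k$ servers available at location~$0$ for stage~$2$, and serves all $k$ $(0,1)$'s, for profit $b+k$, so the ratio is at least $\frac{k+b}{k+g_r}\ge\frac{2+R}{3}$. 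The other branch is the mirror image ($k$ $(1,0)$'s in stage~$2$; OPT earns $a+k$, $\mathcal{A}$ earns at most $k+g_\ell$ in expectation). I expect the only real care to be bookkeeping --- which servers are blocked in stage~$2$, and the fact that $g_\ell,g_r$ are expectations so that the bound on $\mathcal{A}$'s profit must be read as a bound on its expected profit --- rather than any genuine difficulty.
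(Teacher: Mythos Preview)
Your proof is correct and is essentially the paper's own argument: the same stage-1 input of $\lfloor Rk/2\rfloor$ $(0,1)$'s and $\lceil Rk/2\rceil$ $(1,0)$'s, the same branching on expectations into $k$ $(0,1)$'s or $k$ $(1,0)$'s in stage~2, and the same OPT/ALG accounting. The only cosmetic difference is that the paper names explicit thresholds $\alpha=\frac{(1-R)k+3\lfloor Rk/2\rfloor}{2+R}$ and $\beta=k-\alpha$ (mirroring the $\alpha_i,\beta_i$ of AGBA) and branches on whether $E[k\ell]\le\alpha$, whereas you phrase the identical split as the pair of ratio inequalities and verify it by a clean contradiction from $a+b=Rk$ and $g_\ell+g_r\le k$.
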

\begin{proof} 
Let $\mathcal{A}$ be any randomized
algorithm. The adversary requests $\lfloor Rk/2\rfloor$ (0,1)'s and
$\lceil Rk/2\rceil$ (1,0)'s in
stage 1 ($\lfloor Rk/2\rfloor + \lceil Rk/2\rceil=Rk$ by the integrality
condition).
$\mathcal{A}$ accepts $k\ell$ (0,1)'s and $kr$ (1,0)'s.
Let $\alpha= \frac{(1-R)k+3\lfloor
 \frac{Rk}{2}\rfloor}{2+R}$ and if $E[k\ell]\leq \alpha$ (note that the adversary has the full information of
$\mathcal{A}$, so it can compute $E[k\ell]$), then the adversary requests
$k$ (1,0)'s (and zero (0,1)'s) in stage 2. The profit of $\mathcal{A}$
is at most
$$k+E[k\ell]\leq k+\alpha=k+ \frac{(1-R)k+3\lfloor
 \frac{Rk}{2}\rfloor}{2+R}=\frac{3(k+\lfloor
 \frac{Rk}{2}\rfloor)}{2+R}.$$
The profit of OPT is $k+\lfloor Rk/2\rfloor$, and the theorem is proved.
If $E[k\ell]> \alpha$ then let $\beta=\frac{(1-R)k+3\lceil
 \frac{Rk}{2}\rceil}{2+R}$, and it is easy to see that $\alpha+\beta=k$.
Hence we have $E[kr]\leq \beta$ because
$E[k\ell]+E[kr]=E[k\ell+kr]\leq k$. Now the adversary requests $k$
(0,1)'s. The profit of
 $\mathcal{A}$ and OPT are exactly the same as above by replacing $\lfloor
 \frac{Rk}{2}\rfloor$ with $ \lceil\frac{Rk}{2}\rceil$. 
 We may omit
 the rest of calculation. 
 Note that although the two input instances provide a tight lower bound for the competitive ratio, applying Yao's Minimax theorem on any probability distribution over these two input instances does not provide the same tight bound.
\end{proof}

 \section{Concluding Remarks}
\longdelete{
Our analysis fully exploits the notion of ``floating servers,'' and
the arguments of the mathematical induction using supplementary parameters $X_i$
and $U_i$ (and their OPT counterparts), which seems quite powerful and available for future studies of the problem. The current car sharing
model is still elementary, and 
it would be worthwhile to investigate its many extensions.
In particular, relaxation
of the rental period condition should be challenging and important for
practical applications.
}

We have presented a different greedy and balanced algorithm with a new analysis which fully exploits the notion of “floating servers,” and the arguments of the mathematical
induction using supplementary parameters $X_i$ and $U_i$ (and their OPT counterparts).  We believe that the analysis technique is powerful for future studies of many extensions of the car sharing problem, which would be worthwhile to investigate for practical purposes. 
In particular, relaxation of the rental period condition should be challenging and important.

\newpage
\nocite{*}
\bibliography{bibliography}
\bibliographystyle{plain}

\longdelete{
\newpage

\nolinenumbers

\appendix
\section{Appendix}

\begin{figure}[h]
\centering
\includegraphics[scale=0.4]{2L.PNG}
\caption{The car-sharing problem with two locations}
\label{2location}
\end{figure}

\bigskip

\begin{figure}[h]
\centering
\includegraphics[scale=0.6]{GBA.png}
\caption{Server allocation in GBA}
\label{fig:gba}
\end{figure}

\bigskip

\begin{proof} (Proof of Theorem~\ref{lb:prargba})
The proof is almost the same as that of Theorem~\ref{lb:argba}.
Since the adversary has the full information (other than random values) of ALG,
he/she can compute the expected value of ALG's output. So what we have
to do is just removing the floor and ceiling signs from the previous
proof and considering the resulting numbers as expected values. 
The proof is complete since the previous deterministic OPT has a profit of at least $2k$.
\end{proof}

\bigskip

\begin{proof} (Proof of Theorem~\ref{lb:agba})
Let $\mathcal{A}$ be any randomized
algorithm. The adversary requests $\lfloor Rk/2\rfloor$ (0,1)'s and
$\lceil Rk/2\rceil$ (1,0)'s in
stage 1 ($\lfloor Rk/2\rfloor + \lceil Rk/2\rceil=Rk$ by the integrality
condition).
$\mathcal{A}$ accepts $k\ell$ (0,1)'s and $kr$ (1,0)'s.
Let $\alpha= \frac{(1-R)k+3\lfloor
 \frac{Rk}{2}\rfloor}{2+R}$ and if $E[k\ell]\leq \alpha$ (note that the adversary has the full information of
$\mathcal{A}$, so it can compute $E[k\ell]$), then the adversary requests
$k$ (1,0)'s (and zero (0,1)'s) in stage 2. The profit of $\mathcal{A}$
is at most
$$k+E[k\ell]\leq k+\alpha=k+ \frac{(1-R)k+3\lfloor
 \frac{Rk}{2}\rfloor}{2+R}=\frac{3(k+\lfloor
 \frac{Rk}{2}\rfloor)}{2+R}.$$
The profit of OPT is $k+\lfloor Rk/2\rfloor$, and the theorem is proved.
If $E[k\ell]> \alpha$ then let $\beta=\frac{(1-R)k+3\lceil
 \frac{Rk}{2}\rceil}{2+R}$, and it is easy to see that $\alpha+\beta=k$.
Hence we have $E[kr]\leq \beta$ because
$E[k\ell]+E[kr]=E[k\ell+kr]\leq k$. Now the adversary requests $k$
(0,1)'s. The profit of
 $\mathcal{A}$ and OPT are exactly the same as above by replacing $\lfloor
 \frac{Rk}{2}\rfloor$ with $ \lceil\frac{Rk}{2}\rceil$. 
 We may omit
 the rest of calculation. 
 Note that although the two input instances provide a tight lower bound for the competitive ratio, applying Yao's Minimax theorem on any probability distribution over these two input instances does not provide the same tight bound.
\end{proof}
}

\end{document}